\newtheorem{theorem}{Theorem}
\newtheorem{corollary}{Corollary}
\newtheorem{proposition}{Proposition}
\newtheorem{conjecture}{Conjecture}
\theoremstyle{definition}
\newtheorem{definition}{Definition}
\theoremstyle{remark}
\newtheorem{attempt}{Attempt}
\newcommand{\eps}{\varepsilon}
\newcommand{\Ot}[1]{\tilde{O}({#1})}
\def\polylog{\operatorname{polylog}}
\def\poly{\operatorname{poly}}
\title{Popular Conjectures as a Barrier for \\ Dynamic Planar Graph Algorithms}
\author[1]{Amir Abboud\thanks{ Part of the work was performed while visiting the Simons Institute for the Theory of Computing, Berkeley, CA. Research partially supported by Virginia Vassilevska Williams's NSF Grants CCF-1417238 and CCF-1514339, and BSF Grant BSF:2012338. } }
\author[2]{Søren Dahlgaard\thanks{Part of this work was done while the
        author was visiting Stanford University. Research partly supported by
        Mikkel Thorup's Advanced Grant DFF-0602-02499B from the Danish Council for
        Independent Research under the Sapere Aude research career programme.}}
\affil[1]{Stanford University\\\texttt{abboud@cs.stanford.edu}}
\affil[2]{University of Copenhagen\\\texttt{soerend@di.ku.dk}}
\date{}
\begin{document}
\setcounter{page}{0}
\maketitle
\begin{abstract}
    The \emph{dynamic shortest paths} problem on planar graphs asks us to
    preprocess a planar graph $G$ such that we may support insertions and
    deletions of edges in $G$ as well as distance queries between any two nodes
    $u,v$ subject to the constraint that the graph remains planar at all times.
    This problem has been extensively studied in both the theory and
    experimental communities over the past decades and gets solved millions of
    times every day by companies like Google, Microsoft, and Uber. 
    The best known algorithm performs queries and updates in $\Ot{n^{2/3}}$ time, based on ideas
    of a seminal paper by Fakcharoenphol and Rao [FOCS'01]. 
    A $(1+\eps)$-approximation algorithm of Abraham \emph{et
    al.} [STOC'12] performs updates and queries in $\Ot{\sqrt{n}}$ time. 
    An algorithm with a more practical $O(\poly\log{n})$ runtime would be a major breakthrough.
    However, such runtimes are only known for a
    $(1+\eps)$-approximation in a model where only restricted weight updates
    are allowed due to Abraham \emph{et al.} [SODA'16], or for easier problems like connectivity. 

    In this paper, we follow a recent and very active line of work on showing
    lower bounds for polynomial time problems based on popular conjectures, obtaining the first such results for natural problems in \emph{planar graphs}. 
    Such results were previously out of reach due to the highly non-planar nature of known reductions and the impossibility of ``planarizing gadgets''.
    We introduce a new framework which is inspired by techniques from the literatures on distance labelling schemes and on parameterized complexity.
    

     Using our framework, we show that no algorithm for dynamic shortest paths or maximum weight bipartite
    matching in planar graphs can support both updates and queries in amortized $O(n^{\frac{1}{2}-\varepsilon})$ time, for $\eps>0$, unless the classical all-pairs-shortest-paths problem can be solved in truly subcubic time, which is widely believed to be impossible.
  We extend these results to obtain strong lower bounds for other related problems as well as for possible trade-offs between query and update time.
       Interestingly, our lower bounds hold even in
    very restrictive models where only weight updates are allowed.


\end{abstract}

\thispagestyle{empty}

\newpage
\setcounter{page}{1}
\section{Introduction}

The \emph{dynamic shortest paths} problem on \emph{planar graphs} is to preprocess a planar graph $G$, e.g. the national road network, so that we are able to efficiently support the following two operations:

\begin{itemize}

    \item At any point, we might \emph{insert} or \emph{remove} 
        an edge $(u,v)$ in $G$, e.g. in case a road gets congested due to an
        accident. Such updates are subjected to the constraint that the
        planarity of the graph is not violated. We may also consider another
        natural variant in which we are only allowed to update the weights of
        existing edges.
\item We want to be able to quickly answer \emph{queries} that ask for the
    length of the shortest path between two given nodes $u$ and $v$, in the most current graph $G$.

\end{itemize}

This is a problem that gets solved millions of times every day by companies
like Google, Microsoft, and Uber on graphs such as road networks with many
millions of nodes. It is thus a very important question in both theory and
practice whether there exists data structures that can perform updates and
(especially) queries on graphs with $n$ nodes in polylogarithmic or even
$n^{o(1)}$ time.

Shortest paths problems on planar graphs provide an ideal combination of
mathematical simplicity and elegance with faithful modeling of realistic
applications of major industrial interest.
The literature on the topic is too massive for us to survey in this paper: the current draft of the book ``Optimization Problems in Planar Graphs'' by Klein and Mozes \cite{KMbook} dedicates \emph{four chapters} to the algorithmic techniques for shortest paths by the theory community.
While near-optimal algorithms are known for most variants of shortest paths on static planar graphs, the \emph{dynamic} setting has proven much more challenging.

Since an $s,t$-shortest path in a planar graph can be found in near-linear time
(linear time for non-negative weights) \cite{HenzingerKS97,FakR06}, there is a naïve algorithm for the dynamic problem that spends $\tilde{O}(n)$ time on queries.
After progress on other related problems on dynamic planar graphs \cite{Fred85,
GalilI91,DjidPZ91,GalilIS92, Sub93,KleinS98, HenzingerKS97}, the first
sublinear bound was obtained in the seminal paper of Fakcharoenphol and Rao
\cite{FakR06}, which introduced new techniques that led to major results
for other problems like Max Flow (even on static graphs) \cite{BorKMNW11,LackNSW12}. The amortized time
per operation was $O(n^{2/3} \log^{7/3}n)$ and $O(n^{4/5} \log^{13/5}n)$ if
negative edges are allowed, and follow up works of Klein~\cite{Klein05},
Italiano \emph{et al.} \cite{ItalianoNSW11}, and Kaplan \emph{et al.}
\cite{KaplanMNS12} reduced the runtime to $O(n^{2/3} \log^{5/3}n)$ (even
allowing negative weights), and most recently,  Gawrychowski and Karczmarz
\cite{GawrK16} reduced it further to $O(n^{2/3}
\frac{\log^{5/3}n}{\log^{4/3}\log{n}})$. In fact these algorithms give a
trade-off on the update and query time of $\Ot{n/\sqrt{r}}$ and
$\Ot{r}$, for all $r$.
The problem has also been extensively studied from an engineering viewpoint on real-world transportation networks (see \cite{DellingSSW09} for a survey).
State of the art algorithms \cite{Bauer06,DellingGPW11,DellingW07,GeisbergerSSV12,SchultesS07} are able to exploit further structure of road networks (beyond planarity) and process updates to networks with tens of millions of nodes in milliseconds.

In a recent SODA'16 paper, Abraham \emph{et al.} \cite{AbrahamCDGW16} study
worst case bounds under a restricted but realistic model of dynamic updates in
which a base graph $G$ is given and one is allowed to perform only weight
updates subject to the following constraint: For any updated graph $G'$ it must
hold that $d_G(u,v) \le d_{G'}(u,v) \le M\cdot d_G(u,v)$ for all $u,v$ and some
parameter $M$. (Note that this will hold if, for example, the weight of each
edge only changes to within a factor of $M$.)
In this model, the authors obtain a $(1+\eps)$-approximation algorithm that maintains updates in $O(\poly\log{n} \cdot M^4/\eps^3)$ time.
Without this restriction, the best known $(1+\eps)$-approximation algorithms use $\tilde{O}(\sqrt{n})$ updates \cite{KleinS98,AbrahamCG12}.
Thus, when $M$ is small, this model allows for a major improvement over the
above results which require \emph{polynomial time} updates.
But is it enough to allow for \emph{exact} algorithms with \emph{subpolynomial} updates?
Such a result would explain the impressive experimental performance of state of the art algorithms.

On the negative side, Eppstein showed that $\Omega(\log n / \log\log n)$ time is required in the cell probe model \cite{Eppstein97} for planar connectivity (and therefore also shortest path).
However, an unconditional $\log^{\omega(1)}{n}$ lower bound is far beyond the scope of current techniques (see \cite{CliffordGL15}).
In recent years, much stronger lower bounds were obtained for dynamic problems under certain popular conjectures \cite{RodittyZ11,Patrascu10,AbboudV14,KopelowitzPP16,HenzingerKNS15,AbboudVY15,Dahlgaard16}.
For example, Roditty and Zwick \cite{RodittyZ11} proved an $n^{2-o(1)}$ lower
bound for dynamic single source shortest paths in general graphs under the
following conjecture.
\begin{conjecture}[APSP Conjecture]\label{conj:apsp}
    There exists no algorithm for solving the all pairs shortest paths (APSP)
    problem in general weighted (static) graphs in time $O(n^{3-\eps})$ for any
    $\eps > 0$.
\end{conjecture}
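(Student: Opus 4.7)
The statement is not a theorem to be derived but rather the \emph{APSP Conjecture} itself, which is one of the central open problems in fine-grained complexity. No proof is known, and establishing it unconditionally would be a breakthrough on par with proving $\mathsf{P} \neq \mathsf{NP}$ in terms of difficulty, since it asks for a polynomial separation for a concrete problem in $\mathsf{P}$. With that caveat, I can still lay out the natural lines of attack and explain where each one stalls.

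The first plan would be to try to reduce a problem with a widely believed cubic lower bound into APSP in a subcubic fashion. A starting point is the collection of problems that Williams and Vassilevska Williams have shown to be \emph{subcubic equivalent} to APSP: min-plus matrix multiplication, negative triangle detection, metricity of a matrix, and second shortest simple path, among others. So it would suffice to prove a cubic lower bound on any one of these. I would try, for instance, to reduce $k$-clique or some algebraic circuit evaluation problem to negative triangle, but every such attempt runs into the fact that matrix multiplication itself can be done in $O(n^{\omega})$ time with $\omega < 3$, and the min-plus structure is not obviously ``harder'' than the Boolean one in a reducibility-preserving way.

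The second plan would be to prove the lower bound unconditionally in a restricted model. In the comparison-and-addition model, Fredman's trick already yields $O(n^3 / \log n)$, and refinements by Chan and Williams give $n^3 / \exp(\Omega(\sqrt{\log n}))$, so any unconditional approach must rule out these savings. One could try the algebraic decision tree model or the path-comparison model of Karger--Koller--Phillips; here the hope would be to show that any algorithm that only compares sums of input weights must perform $\Omega(n^{3-o(1)})$ operations. I would look for an adversary argument exploiting the $\binom{n}{2}$ shortest-path queries, each of which can depend on up to $n-1$ edges, to lower-bound the information content that must be extracted.

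The hard part, and the reason this remains conjectural, is that no current technique proves superlinear lower bounds against general algorithms for any problem in $\mathsf{P}$, let alone a cubic lower bound. Known subcubic algorithms for APSP exploit global algebraic cancellation (via fast matrix multiplication over small-universe encodings) in ways that defeat local adversary arguments, while circuit-complexity lower bounds strong enough to imply $n^{3-o(1)}$ runtime bounds are far out of reach. For this reason, the paper treats the statement as a conjecture rather than a theorem, and uses it as a hypothesis in reductions; a full proof is therefore outside the scope of what I would realistically attempt here.
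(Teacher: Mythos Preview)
Your assessment is correct: the statement is a \emph{conjecture}, not a theorem, and the paper makes no attempt to prove it. It is introduced solely as a hypothesis on which the conditional lower bounds of Theorems~\ref{thm:apsp_tradeoff}, \ref{thm:mwm}, and \ref{thm:st_tradeoffs} rest, so there is no proof in the paper to compare your proposal against.
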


However, the reductions used in these results produce graphs that are
fundamentally non-planar, such as dense graphs on three layers, and
popular 
approaches for making them planar, e.g. by replacing each edge crossing with a small ``planarizing gadget'', are provably impossible (this was recently shown for matching \cite{GurjarKMST12} and is easier to show for problems like reachability and shortest paths).
Due to this and other challenges 
no (conditional) polynomial lower bounds were known for any natural problem on (static or dynamic) planar graphs.

On a more general note, an important direction for future research on the fine-grained complexity of polynomial time problems (a.k.a. Hardness in P) is to understand the complexity of fundamental problems on restricted but realistic classes of inputs.
A Recent result along these lines is the observation that the $n^{2-o(1)}$ lower bound for computing the diameter of a sparse graph \cite{RodittyV13} holds even when the treewidth of the graph is $O(\log{n})$ \cite{AbboudVW16}.
In this paper, we take a substantial step in this direction, proving the first strong (conditional) lower bounds for natural problems on planar graphs.

\subsection{Our Results}
We present the first conditional lower bounds for natural problems on
planar graphs using a new framework based on several ideas for conditional
lower bounds on dynamic graphs combined with ideas from parameterized
complexity \cite{Marx07,Marx13} and labeling schemes \cite{GavoillePPR04}. We
believe that this framework is of general interest and might lead to more
interesting results for planar graphs. Our framework shows an interesting
connection between dynamic problems and distance labeling and also
slightly improves the result of \cite{GavoillePPR04} providing a tight
lower bound for distance labeling in weighted planar graphs (this is discussed
in Section~\ref{sec:techniques}).

Our first result is a conditional \emph{polynomial} lower bound for dynamic shortest paths on planar graphs.
Like several recent results \cite{VassW10,AbboudL13,AbboudGV15,AbboudVY15,Saha15,Dahlgaard16}, our lower bound is based on the APSP conjecture.
Perhaps the best argument for this conjecture is the fact that it has endured decades of extensive algorithmic attacks.
Moreover, due to the known \emph{subcubic equivalences}
\cite{VassW10,AbboudGV15,Saha15}, the conjecture is false \emph{if and only if}
several other fundamental graph and matrix problems can be solved substantially
faster.

\begin{theorem}\label{thm:apsp_tradeoff}
    No algorithm can solve the dynamic APSP problem in planar graphs on $N$
    nodes with amortized query time $q(N)$ and update time $u(N)$ such that
    $q(N)\cdot u(N) = O(N^{1-\eps})$ for any $\eps > 0$ unless Conjecture~\ref{conj:apsp} is
    false. This holds even if we only allow weight updates to $G$.
\end{theorem}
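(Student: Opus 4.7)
The plan is to reduce from the $(\min,+)$-matrix product of two $n\times n$ integer matrices $A,B$, which is subcubic-equivalent to APSP \cite{VassW10}, to a polynomial sequence of weight updates and shortest-path queries on a dynamic planar graph. Assuming a hypothetical algorithm with $q(N)\cdot u(N)=O(N^{1-\eps})$, we will use it to compute $C=A\star B$ in truly subcubic time, contradicting Conjecture~\ref{conj:apsp}.

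The heart of the reduction is a parameterized planar gadget $G_r$ on $N=\poly(n/r)$ vertices, where $r$ is a trade-off parameter to be chosen as a function of $u$ and $q$. The gadget has $r$ designated source terminals and $r$ designated target terminals on its outer face, with auxiliary vertices arranged in a planar grid-like structure. The key design property is: given any ``$r\times r\times r$ sub-cube'' of the computation (a triple $(I_A,I_B,K)$ specifying $r$ rows of $A$, $r$ columns of $B$, and $r$ consecutive middle indices), $O(r^2)$ weight updates suffice to configure $G_r$ so that the distance from source terminal $i$ to target terminal $j$ equals the partial product $\min_{k\in K} A[i,k]+B[k,j]$, up to a known additive offset. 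Iterating over all $O((n/r)^3)$ sub-cubes then recovers $C$ using $O(n^3/r)$ updates and $O(n^3/r)$ queries. Plugging in $qu=O(N^{1-\eps})$ and noting that $u+q=O(N^{1-\eps})$ in the worst adversarial split, the total running time $O((n^3/r)(u+q))$ becomes $O(n^{5-2\eps}/r^{3-2\eps})$ for $N=\Theta((n/r)^2)$; choosing $r$ slightly above $n^{(2-2\eps)/(3-2\eps)}$ makes this $o(n^3)$, yielding the contradiction and also covering the full trade-off stated in the theorem.

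The main obstacle is the construction of the gadget $G_r$ itself. The paper emphasizes that no local ``planarizing gadget'' can replace edge crossings in a shortest-paths instance, so the three-layer or crossbar templates used in prior dynamic hardness reductions cannot be ported directly. The approach is to distribute the encoding of $A$ and $B$ over the weight-update sequence rather than the graph topology: at each moment $G_r$ only represents a single low-dimensional slice of the problem, which is planar-realizable in $\poly(n/r)$ vertices. This is where the framework combines ideas from parameterized complexity \cite{Marx07,Marx13} (slicing dense problems into low-dimensional sub-problems) with techniques from weighted planar distance labeling \cite{GavoillePPR04} (the gadget $G_r$ is morally an ``unrolled'' distance-labeling instance whose labels are switched on and off by weight updates). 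A final subtlety is that the entire reduction only modifies edge weights of a fixed planar graph, so the lower bound applies even in the restricted weight-update model of Abraham et al.\ \cite{AbrahamCDGW16}.
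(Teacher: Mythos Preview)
Your proposal contains a genuine gap that prevents it from establishing the full trade-off $q(N)\cdot u(N)=\Omega(N^{1-o(1)})$.

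The fatal step is the sentence ``Plugging in $qu=O(N^{1-\eps})$ and noting that $u+q=O(N^{1-\eps})$ in the worst adversarial split.'' This inference is false: from $u\cdot q=O(N^{1-\eps})$ one cannot bound $u+q$ by anything better than $O(N)$ (take $u=N^{1-\eps}$, $q=O(1)$). Because your reduction performs $O(n^3/r)$ updates \emph{and} $O(n^3/r)$ queries, its total running time is $(n^3/r)\cdot(u+q)$, so it can only exploit an upper bound on $u+q$ (equivalently, on $\max(u,q)$). With a gadget of size $N=\Theta(r^2)$ your argument would correctly show that $\max(u,q)=o(\sqrt{N})$ contradicts the APSP conjecture, but that is only the weaker statement of Theorem~\ref{thm:apsp_simple}, not the product trade-off of Theorem~\ref{thm:apsp_tradeoff}. (Your stated gadget size $N=\Theta((n/r)^2)$ is also inconsistent with having $r$ terminals and encoding $\Theta(r^2)$ weights; the natural size is $\Theta(r^2)$.)

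The paper obtains the full trade-off by a different mechanism: it encodes one matrix $B$ \emph{statically} in the planar grid and encodes $A$ row by row via weight updates, and---crucially---it reduces from the \emph{rectangular} $(\min,+)$ product of an $n\times n^\beta$ matrix with an $n^\beta\times n^\alpha$ matrix, with $\alpha+\beta=1$ and $N=\Theta(n)$. This makes the number of updates $O(n^{1+\beta})$ and the number of queries $O(n^{1+\alpha})$ \emph{asymmetric}. Given a hypothetical algorithm with $q(N)=O(N^\gamma)$, one chooses $\beta$ just above $\gamma$ so that queries are too cheap to account for the required $n^{2-o(1)}$ total work, forcing $u(N)\ge N^{1-\gamma-o(1)}$. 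Your sub-cube scheme, by contrast, always balances updates and queries and so cannot separate the two. To salvage your approach you would need a way to tune the update-to-query ratio; the cleanest such knob is exactly the rectangular product the paper uses.
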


Thus, under the APSP conjecture, there is no hope for a very efficient dynamic
shortest paths algorithm on planar graphs with provable guarantees.
We show that an algorithm achieving $O(n^{1/2-\eps})$ time for both updates and queries is unlikely, implying that the current upper bounds achieving $\tilde{O}(n^{2/3})$ time are not too far from being conditionally optimal. Furthermore,
our result implies that any algorithm with subpolynomial query time must have
\emph{linear} update time (and the other way around). Thus, the naïve algorithm
of simply computing the entire shortest path every time a query is made is
(conditionally) optimal if we want $n^{o(1)}$ update time.

An important property of Theorem \ref{thm:apsp_tradeoff} is that our reduction
does not even violate planarity with respect to a \emph{fixed embedding}.
Thus, we give lower bounds even for \emph{plane graph} problems, which in many cases allow for improved upper bounds over flexible planar graphs (e.g. for reachability\cite{DiksS07,AvrahamKS15}).
Moreover, our graphs are \emph{grid graphs} which are subgraphs of the infinite
grid, a special and highly structured subclass of planar graphs.
Finally, as stated in Theorem~\ref{thm:apsp_tradeoff} our lower bound holds even for the edge weight update model of Abraham \emph{et al.} \cite{AbrahamCDGW16}, where each edge only ever changes its weight to within a factor of $M>1$.
While they obtain fast $\polylog(n)$ time $(1+\eps)$-approximation algorithm in
this model, we show that an exact answer with the same query time likely
requires \emph{linear} update time and that an algorithm with $O(n^{1/2-\eps})$
runtime for both is highly unlikely. Thus, further theoretical restrictions need to be added in order to explain the impressive performance on real road networks.

We also extend Theorem~\ref{thm:apsp_tradeoff} to the case in which we only need
to maintain one $s,t$ distance (the $s,t$-shortest path problem). While this
problem is equivalent to the APSP version in general (as we may connect $s$ and $t$ to any two
nodes $u,v$ we wish to know the distance between) this may violate planarity
and especially a fixed embedding. We show that this problem exhibits similar
trade-offs under Conjecture~\ref{conj:apsp} even if we are only allowed to update
weights. Finally, we note that in the case of directed planar graphs
allowing negative edge weights our techniques can be extended to show the same
hardness result for any approximation under Conjecture~\ref{conj:apsp}.

\medskip

Next, we seek a lower bound for the unweighted version of the problem, which arguably, is of more fundamental interest.
Typically, a conditional lower bound under the APSP conjecture for a weighted problem can be modified into a lower bound for its unweighted version under the Boolean Matrix Multiplication (BMM) Conjecture \cite{DorHZ00,RodittyZ11,VassW10,AbboudV14,AbboudGV15}.
While for combinatorial algorithms the complexity of BMM is conjectured to be cubic, it is known that using algebraic techniques there is an $O(n^\omega)$ algorithm, where $\omega<2.373$ \cite{Vass12,LeGall14}.
When reducing to dynamic problems, however, lower bounds under BMM are often
under a certain online version of BMM for which, Henzinger \emph{et al.}
\cite{HenzingerKNS15} conjecture that there is no truly subcubic algorithms,
even using algebraic techniques.
This \emph{Online Matrix Vector Multiplication} (OMv) Conjecture is stated
formally in Section~\ref{sec:unweighted}.

The OMv conjecture implies strong lower bounds for many dynamic problems on general graphs \cite{HenzingerKNS15}, via extremely simple reductions \cite{AbboudV14,HenzingerKNS15}.
Our next result is a significantly more involved reduction from OMv to dynamic shortest paths on planar graphs, giving unweighted versions of the theorems above.
The lower bounds are slightly weaker but they still rule out algorithms with
subpolynomial update and query times, even in grid graphs.
We remark that all lower bounds under the APSP conjecture in this paper, such
as Theorem~\ref{thm:apsp_tradeoff}, also hold under OMv.

\begin{theorem}\label{thm:apsp_unit}
    No algorithm can solve the dynamic APSP problem in unit weight planar
    graphs on $N$ nodes with amortized query time $q(N)$ and update time $u(N)$
    such that $\max(q(N)^2\cdot u(N),q(N)\cdot u(N)^2) = O(N^{1-\eps})$ for any
    $\eps > 0$ unless the OMv conjecture of \cite{HenzingerKNS15} is false.
    This holds even if we only allow weight updates.
\end{theorem}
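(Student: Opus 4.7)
The plan is to reduce the OMv problem to dynamic APSP on unit-weight planar graphs, adapting the framework used for Theorem~\ref{thm:apsp_tradeoff} from the weighted to the Boolean setting. Given a Boolean matrix $M \in \{0,1\}^{n \times n}$ and an online stream of vectors $v_1, \ldots, v_n \in \{0,1\}^n$, I would build a unit-weight planar graph $G$ on $N = \poly(n)$ nodes that hard-codes $M$, and then simulate each product $M v_i$ by a batch of edge toggles (realized as weight updates on a fixed edge set) followed by a batch of distance queries.

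For the construction, I would keep the same grid-like backbone as in the weighted reduction: rows and columns of $M$ are routed as horizontal and vertical wires on a planar grid, and every entry $M_{jk}$ is encoded by a small local gadget near the $(j,k)$ crossing. Because weights are unit, the ``value'' of $M_{jk}$ is realized by a unit-length difference between two parallel subpaths inside the crossing gadget, rather than by an edge weight. The vector $v_i$ is loaded by $O(n)$ weight updates toggling ``column-activation'' edges on one side of the grid, so that column $k$ is short-cut precisely when $(v_i)_k = 1$. A distance query from a global source to the $j$-th row-sink then detects $(Mv_i)_j$: any shortest path must cross the grid through some active column, and its length drops below a fixed threshold iff there exists $k$ with $(v_i)_k = 1$ and $M_{jk} = 1$.

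Putting this together, the $n$ online vectors require a total of $O(n^2)$ updates and $O(n^2)$ queries; under the OMv conjecture the total cost must be $n^{3-o(1)}$. Setting $N = n^c$ for the size blow-up incurred by the unit-weight gadgets and balancing updates against queries then yields the trade-off $\max(q(N)^2\,u(N),\; q(N)\,u(N)^2) = \Omega(N^{1-\eps})$, with the $\max$ capturing either the update-heavy or the query-heavy amortization regime of the reduction.

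The main obstacle is the combination of planarity with the unit-weight restriction. In the weighted reduction of Theorem~\ref{thm:apsp_tradeoff}, the ``non-planar interactions'' between far-apart rows and columns are encoded directly into edge weights; in the unit-weight setting those same interactions must be realized by integer path lengths, which forces the crossing gadgets to be polynomially large and inflates $N$ accordingly. Quantifying this blow-up precisely is exactly what produces the extra $\max(q,u)$ factor compared to the weighted bound $qu = \Omega(N^{1-\eps})$, and doing so while preserving the fixed planar grid embedding and staying within the weight-update-only model will be the most delicate part of the argument.
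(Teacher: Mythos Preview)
Your high-level plan is on the right track—the paper also reduces from OMv by subdividing the weighted grid of Section~\ref{sec:grid} into unit-weight paths—but there are two concrete gaps that would prevent your sketch from going through as stated.

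First, the ``global source to $j$-th row-sink'' query does not work in this grid. By Corollary~\ref{cor:grid_dist}, the distance from column-top $a_j$ to row-sink $b_k$ is $2R(C-j+1)+2jk$ (minus $1$ if $M_{k,j}=1$): the dominant term depends on \emph{both} $j$ and $k$, so a shortest path from a single source connected to the activated columns will simply pick the column with the smallest index, not the one witnessing an entry $M_{k,j}=1$. This is exactly the obstacle discussed in Attempt~\ref{att:apsp_bad}. The paper resolves it by concatenating a mirrored copy $G'_M$ of the grid and querying $d(a_j,a'_{n^\alpha-j+1})$; the two grids together cancel the $j$- and $k$-dependent terms so that only the $\pm 1$ from the shortcut survives. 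Your proposal omits this mirrored-grid step, and without it the Boolean OR cannot be read off from a single distance threshold.

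Second, a single reduction with a square $n\times n$ matrix and $O(n^2)$ updates plus $O(n^2)$ queries only yields $\max(q(N),u(N))\ge N^{1/3-o(1)}$ (since subdivision blows the grid up to $N=\Theta(n^3)$ nodes); it does not by itself give the full trade-off $\max(q^2u,qu^2)=\Omega(N^{1-\eps})$. The paper obtains the trade-off by reducing instead from the rectangular OuMv variant with an $n^\beta\times n^\alpha$ matrix: this makes the number of updates $O(n^{1+\beta})$ and the number of queries $O(n^{1+\alpha})$ while $N=\Theta(n^{2\beta+\alpha}+n^{2\alpha+\beta})$, and then \emph{choosing} $\alpha,\beta$ as a function of the hypothetical query exponent $\gamma$ forces the stated lower bound on updates. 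Your ``balancing updates against queries'' hides precisely this parameter choice, which is where the extra factor of $\max(q,u)$ compared to the weighted bound actually arises—not from the gadget blow-up alone.
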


For instance, Theorem~\ref{thm:apsp_unit} shows that no algorithm is likely to
have $O(n^{\frac{1}{3}-\eps})$ amortized time for both queries and updates. It
also shows that if we want to have $n^{o(1)}$ for one we likely need
$n^{\frac{1}{2}-o(1)}$ time for the other.

Combined with previous results, our theorems reveal a mysterious phenomenon: there are two contradicting separations between planar graphs and small treewidth graphs, in terms of the time complexity of dynamic problems related to shortest paths (under popular conjectures).
To illustrate these separations, consider the dynamic $s,t$-shortest path problem and the dynamic approximate diameter problem.
For $s,t$-shortest path, planar graphs are much harder, they require
$n^{1/3-o(1)}$ update or query time by Theorem~\ref{thm:apsp_unit} (under OMv), while on
small (polylog) treewidth graphs there is an algorithm achieving polylog
updates and queries \cite{AbrahamCDGW16}.
On the other hand, for approximate diameter, planar graphs are provably
\emph{easier} under the Strong Exponential Time Hypothesis (SETH).
A naive algorithm that runs the known $\tilde{O}(n)$ time static algorithm for $(1+\eps)$ approximate diameter on planar graphs after each update \cite{WeimannY16}, shatters an $n^{2-o(1)}$ SETH-based lower bound for a $(4/3-\delta)$ approximation for diameter on graphs with treewidth $O(\log n)$ \cite{AbboudV14}\footnote{This lower bound follows from observing that the reduction from CNF-SAT to dynamic diameter \cite{AbboudV14} produces graphs with logarithmic treewidth. For more details on an analogous observation w.r.t. the lower bound for diameter in static graphs, see \cite{AbboudVW16}.}.

\medskip

We demonstrate the potential of our framework to yield further strong lower
bounds for important problems in planar graphs by proving such a result
for another well-studied problem in the graph theory literature, namely Maximum
Weight Matching.

Maintaining a maximum matching in general dynamic graphs is a difficult task: the best known algorithm by Sankowski \cite{Sankowski07} has an $O(n^{1.495})$ amortized update time, and it is better than the simple $O(m)$ algorithm (that looks for an augmenting path after every update) only in dense graphs.
Recent results show barriers for much faster algorithms via conjectures like OMv and $3$-SUM \cite{AbboudV14,KopelowitzPP16,HenzingerKNS15,Dahlgaard16}.
To our knowledge, this $O(m)$ update time is the best known for planar graphs and no lower bound is known.
Meanwhile, there has been tremendous progress on approximation algorithms  \cite{OnakR10,BaswanaGS11,NeimanS15,BernS15, BhattaHI15,BernS16,HeTZ14,KopeKPS14,PelegS16, AnandBGS12}, both on general and planar graphs, as well as for the natural \emph{Maximum Weight Matching} (see the references in  \cite{DuanP14} for the history of this variant).
Planar graphs have proven easier to work with in this context: the state of the art deterministic algorithm for maintaining a $(1+\eps)$-maximum matching in general graphs has $O(\sqrt{m})$ update time \cite{GuptaP13}, while in planar graphs the bound is $O(1)$ \cite{PelegS16}.

We show a strong \emph{polynomial} lower bound for Max Weight Matching on
planar graphs, that holds even for bipartite graphs with a fixed embedding into
the plane and even in grid graphs. The lower bound is similar to
Theorem~\ref{thm:apsp_tradeoff} and shows a trade-off between query and update time.

\begin{theorem}\label{thm:mwm}
    No algorithm can solve the dynamic maximum weight matching problem in
    bipartite planar graphs on $N$ nodes with amortized update time $u(N)$ and
    query time $q(N)$ such that $\max(q(N),u(N)) =
    O(N^{\frac{1}{2}-\eps})$ for any $\eps > 0$ unless Conjecture~\ref{conj:apsp} is
    false. Furthermore, if $q(N) \ge u(N)$ the algorithm cannot have $q(N)\cdot
    u(N) = O(N^{1-\eps})$. This holds even if the planar embedding of $G$
    never changes.
\end{theorem}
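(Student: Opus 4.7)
\textbf{Proof proposal for Theorem \ref{thm:mwm}.} The plan is to reduce from the dynamic weighted shortest paths instance produced for Theorem \ref{thm:apsp_tradeoff} to dynamic maximum weight matching in a bipartite planar graph via a local, planarity-preserving gadget construction. Since grid graphs are bipartite under the checkerboard two-coloring $A \cup B$, the APSP-hard instance from Theorem \ref{thm:apsp_tradeoff} is already a bipartite planar (in fact, grid) graph $G$ with a fixed embedding, a source $s$, a target $t$, and a sequence of weight updates and $d_G(s,t)$ queries whose efficient handling would refute Conjecture~\ref{conj:apsp}. It therefore suffices to build a bipartite planar graph $G'$ on $O(N)$ vertices, with a fixed embedding, whose max weight matching value $M^*(G')$ is an affine function of $d_G(s,t)$, such that each weight update to $G$ triggers $O(1)$ weight updates to $G'$ and each distance query becomes one matching query.

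Concretely, I would build $G'$ as follows. Keep the bipartition of $G$; for every vertex $v \notin \{s,t\}$ attach a pendant ``loop'' vertex $\tilde v$ on the opposite side of the bipartition, connected to $v$ by an edge of large weight $W$ (much bigger than the maximum possible $s$-$t$ distance in $G$); replace each edge $(u,v)$ of $G$ of weight $w$ by a constant-size planar gadget that, whenever both $u$ and $v$ abandon their loop pendants to use the gadget, contributes $2W - w$ to the matching weight. Deliberately omit the loop pendants for $s$ and $t$. Setting $W$ large forces any maximum matching to saturate essentially every loop pendant; the vertices that are not matched to their loops must then be connected through gadgets into alternating structures whose only possible path endpoints are $s$ and $t$. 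Since any cycle or extra path component strictly loses $\Omega(W)$ weight compared to keeping the involved vertices on their loops, the optimum reduces to picking a single $s$-$t$ path in $G$ of minimum weight, yielding $M^*(G') = C - d_G(s,t)$ for a constant $C$ depending only on $N$ and $W$.

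The main obstacle is proving the two structural claims underlying this gadget: (a) that the optimum matching really corresponds to a single $s$-$t$ alternating path rather than to some combination of cycles and paths with higher total weight, and (b) that every gadget fits in a small neighborhood of its edge in the fixed planar embedding of $G$, so that $G'$ is planar with a fixed embedding and remains so under weight updates. Issue (a) is handled by a careful accounting with the scale of $W$: any cycle or extra path component sacrifices at least two pendant loops of weight $W$ while gaining only bounded weight from the underlying edges, so the loss is $\Omega(W)$. Issue (b) reduces to exhibiting an explicit constant-size bipartite planar gadget that can be drawn inside a thin strip around each edge of $G$; since the gadget is even (hence bipartite-preserving) and a planar grid with such strip replacements is again a planar grid-like graph, this is routine.

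Finally, the trade-off claims follow immediately from the $O(1)$ overhead per operation: the $\max(q(N),u(N)) = O(N^{1/2-\eps})$ bound and the $q(N)\cdot u(N) = O(N^{1-\eps})$ bound in the regime $q(N) \ge u(N)$ transfer from Theorem \ref{thm:apsp_tradeoff} verbatim, because each APSP-relevant distance query in $G$ becomes a single matching query in $G'$ while each weight update in $G$ becomes $O(1)$ weight updates in $G'$, and the planar embedding of $G'$ is never modified.
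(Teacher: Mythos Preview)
Your high-level plan---reduce the weighted planar shortest-path instance to max weight matching via a local bipartite gadget---is the right idea, and it is also what the paper does. But there are two genuine gaps.

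First, the instance from Theorem~\ref{thm:apsp_tradeoff} does \emph{not} have a fixed source and sink: the queries are between varying pairs $(a_j,a'_{n^\alpha-j+1})$, one pair per column $j$. Any matching reduction must therefore change which two vertices play the role of $s,t$ from query to query, which costs $O(1)$ extra updates per query and is exactly why the trade-off in the theorem only holds in the regime $q(N)\ge u(N)$. Your write-up assumes a single fixed $(s,t)$ and so misses both this mechanism and the reason for the asymmetry.

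Second, and more seriously, the edge gadget you rely on is never specified, and the natural candidates do not work. If you simply keep each edge $(u,v)$ with weight $2W-w_{uv}$ and give every $v\notin\{s,t\}$ a pendant of weight $W$, then along an $s$--$t$ path $v_0,\dots,v_\ell$ only the \emph{alternate} edges $e_1,e_3,\dots$ enter the matching, and the net gain over the all-pendants matching is $2W-\sum_{i\text{ odd}} w_{e_i}$, not $2W - d_G(s,t)$. So the optimum does not encode the shortest path at all; your ``$\Omega(W)$ loss for any cycle or extra path'' accounting is correct but irrelevant, because already the intended path contributes the wrong quantity. Getting every edge of the path to count requires that each original \emph{vertex} carry a matchable internal edge, not each original edge---i.e., node-splitting rather than edge-gadgets. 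This is precisely what the paper does: each vertex $v$ is replaced by a weight-$0$ edge $(v^u,v^d)$ (or $(v^l,v^r)$), the grid has a unique perfect matching consisting of exactly these zero-weight edges (proved by a peeling argument, Proposition~\ref{prop:matching}), and forcing $s,t$ onto $a_j^u,a'^{\,u}_{n^\alpha-j+1}$ leaves $a_j^d,a'^{\,d}_{n^\alpha-j+1}$ exposed, so any perfect matching differs from the unique one along a single alternating $a_j$--$a'_{n^\alpha-j+1}$ path whose weight is exactly the shortest-path distance. The conversion to \emph{maximum} weight matching is then a straightforward weight complementation. Your gadget step is thus not ``routine''; it is the crux, and the clean way to do it is node-splitting, not pendants-plus-edge-gadgets.
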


Finally, we use our framework to show lower bounds for various other problems,
like dynamic girth and diameter. We also argue that our bounds can be
turned into worst-case bounds for incremental and decremental versions of the
same problems.

\subsection{Techniques and relations to distance labeling}\label{sec:techniques}
To prove the results mentioned above we introduce a new framework for reductions to optimization problems on planar graphs.
As mentioned we combine ideas from previous lower bound proofs for dynamic graph problems
with an approach inspired by the framework of Marx for hardness of
parameterized geometric problem (via the Grid Tiling problem)
\cite{Marx07,Marx13} and a graph construction from the research on labelling
schemes by Gavoille \emph{et al.} \cite{GavoillePPR04}.

Gavoille \emph{et al.} \cite{GavoillePPR04} used a family of grid-like graphs
to prove an (unconditional) lower bound of $\Omega(\sqrt{n})$ on the label size
of distance labeling in weighted planar graphs along with a $O(\sqrt{n}\log n)$ upper
bound. (A full discussion of distance labeling schemes is outside the scope of
this paper. For details on this we refer to
\cite{GavoillePPR04,AlstrupGHP16,AlstrupDKP15}).
In this paper we generalize their family of graphs to a family of
grid graphs capable of representing general matrices with weights in $[\poly(n)]$
via shortest paths distances. Using our construction with the framework of
\cite{GavoillePPR04}, we obtain a \emph{tight} $\Omega(\sqrt{n}\log n)$ lower
bound on the size of distance labeling in weighted planar graphs (and even grid
graphs).

Our main approach works by reducing from the
$(\min,+)$-Matrix-Multiplication problem which is known to be equivalent to APSP (see \cite{VassW10}):
Given two $n \times n$ matrices $A,B$  with entries in $[poly(n)]$, compute a matrix $C = A \oplus B$ such that $C[i,j] = \min_{k\in[n]} A[i,k] + B[k,j]$.
 By concatenating grid
graphs from the family described above we are able to represent
one of the matrices in the product and we can then simulate the multiplication
process via updates and shortest paths queries.

In a certain intuitive sense, our connection between dynamic algorithms and labeling schemes is the reverse direction of the one shown by Abraham \emph{et al.} \cite{AbrahamCG12} to obtain their $\tilde{O}(\sqrt{N})$ update time $(1+\eps)$-approximation algorithm for dynamic APSP.
Their algorithm utilizes a clever upper bound for the so-called \emph{forbidden
set distance labeling} problem,
while our lower bound constructions have a clever lower bound for labeling schemes embedded in them.

\section{A grid construction}\label{sec:grid}
In order to reduce to problems on planar graphs we will need a planar
construction, which is able to capture the complications of problems like OMv
and APSP. To do this we will employ a grid construction based on the one used
in \cite{GavoillePPR04} to prove lower bounds on distance labeling for planar
graphs. Our construction takes a matrix as input and produces a grid graph
representing that matrix. We first present a boolean version similar to the one
from \cite{GavoillePPR04} and then modify it to obtain a version taking
matrices with integer entries as input. This modified matrix also immediately
leads to a tight $\Omega(\sqrt{n}\log n)$ lower bound for distance labeling in
planar graphs with weights in $[poly(n)]$ when combined with the framework of
\cite{GavoillePPR04}.

\begin{definition}\label{defn:grid}
    Let $M$ be a boolean $R\times C$ matrix. We will call the following
    construction the \emph{grid embedding} of $M$:

    Let $G_M$ be a rectangular grid graph with $R$ rows and $C$ columns. Denote
    the node at intersection $(i,j)$ by $u_{i,j}$ ($u_{1,1}$ is top-left and
    $u_{R,C}$ is bottom-right). Add $C$ nodes $a_1,\ldots, a_C$ and edges
    $(u_{1,j}, a_j)$ above $G_M$. Similarly add the nodes $b_1,\ldots, b_R$ and
    edges $(u_{i,C},b_i)$ to the right of $G_M$. Now subdivide each vertical
    edge adding the node $v_{i,j}$ above $u_{i,j}$, and subdivide each
    horizontal edge adding the node $w_{i,j}$ to the right of $u_{i,j}$.
    Finally, for each entry of $M$ such that $M_{i,j} = 1$ add the node
    $x_{i,j}$ and edges $(v_{i,j},x_{i,j})$ and $(w_{i,j},x_{i,j})$ to the
    graph.

    The weights of $G_M$ are as follows: Each edge $(u_{i,j}, v_{i+1,j})$ and
    $(a_j, v_{1,j})$ has weight $2j-1$. Each edge $(w_{i,j}, u_{i+1,j})$ and
    $(w_{i,C}, b_i)$ has weight $2R-2$. The edge $(u_{i,j},w_{i,j})$ has weight
    $2$. All remaining edges have weight $1$.
\end{definition}

We will call the two-edge path $v_{i,j}\to x_{i,j}\to w_{i,j}$ a
\emph{shortcut} from $v_{i,j}$ to $w_{i,j}$ as it has length $1$ less than the
path $v_{i,j}\to u_{i,j}\to w_{i,j}$. Clearly, the grid embedding of a $R\times
C$ matrix has $O(RC)$ nodes. It is also easy to see that such a grid embedding
is a subgraph of a $2R+1\times 2C+1$ rectangular grid. The construction of
Definition~\ref{defn:grid} for a $3\times 3$ matrix can be seen in Figure~\ref{fig:grid}.

\begin{figure}[htbp]
    \centering
    \includegraphics[width=.65\textwidth]{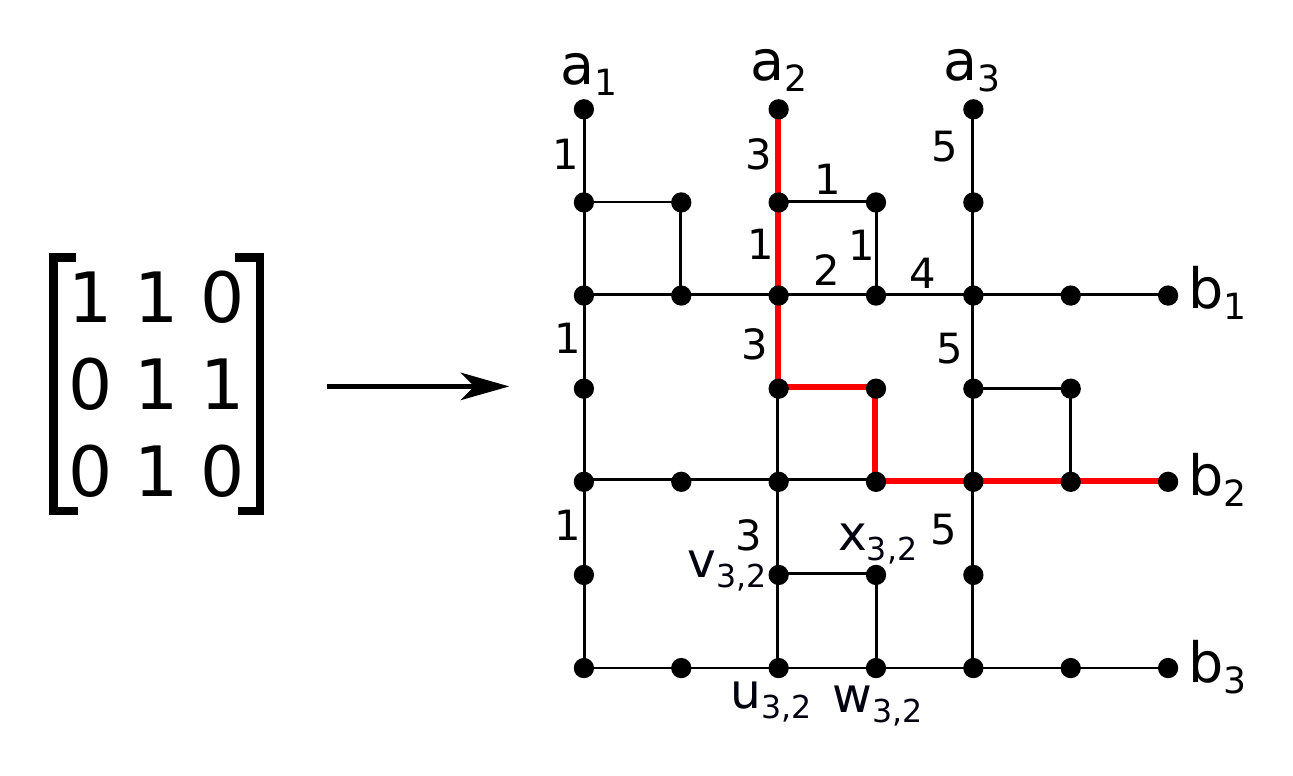}
    \caption{Illustration of the construction of Definition~\ref{defn:grid}. The shortest
    path from $a_2$ to $b_2$ is highlighted in red. Most edge weights are
    omitted for clarity.}
    \label{fig:grid}
\end{figure}

\begin{proposition}\label{prop:grid_dist}
    Let $M$ be a boolean $R\times C$ matrix and let $G_M$ be its grid embedding
    as defined in Definition~\ref{defn:grid}. Then for any $1\le i\le R,1\le j\le C$ and
    $i< k\le R$ the shortest path distance from $u_{i,j}$ to $b_k$ is exactly
    \[
        (k-i)\cdot2j + 2R\cdot (C-j+1)
    \]
    if $M_{k,j} = 0$ and
    \[
        (k-i)\cdot2j + 2R\cdot (C-j+1) - 1
    \]
    otherwise.
\end{proposition}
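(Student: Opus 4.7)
The plan is to show that an optimal shortest path from $u_{i,j}$ to $b_k$ descends in column $j$ all the way to row $k$ and then heads right to $b_k$, using the shortcut $v_{k,j}\to x_{k,j}\to w_{k,j}$ exactly when $M_{k,j}=1$. First I would argue monotonicity. Since all edge weights are strictly positive, any leftward or upward step on the path must eventually be compensated by a matching forward step; each such detour costs at least $2$ in extra weight, whereas any shortcut saves only $1$. A short exchange argument along these lines restricts attention to down-right monotone paths.

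Next I parameterize such a path by the strictly increasing sequence of columns $j = c_0 < c_1 < \cdots < c_t \le C$ in which vertical steps are taken, with $d_s \ge 1$ vertical steps in column $c_s$ and $\sum_{s=0}^t d_s = k-i$. The key observation is that the total horizontal contribution is always $2R\cdot(C-j+1)$, because each ``horizontal unit'' (the pair $u_{r,c}\to w_{r,c}\to u_{r,c+1}$ of weights $2$ and $2R-2$, or the symmetric trip $u_{r,C}\to w_{r,C}\to b_r$) weighs exactly $2R$ independent of the row $r$ it occurs in. The vertical contribution is $\sum_s d_s\cdot 2c_s$, which equals $(k-i)\cdot 2j$ when $t=0$ and exceeds this by at least $\sum_{s=1}^t 2(c_s-j) \ge t(t+1)$ otherwise, since $c_s \ge j+s$ and $d_s \ge 1$.

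Finally I account for shortcuts. A shortcut at $(r,c)$ with $M_{r,c}=1$ saves exactly $1$, and only at a down-to-right turn, since it substitutes the weight-$3$ corner $v_{r,c}\to u_{r,c}\to w_{r,c}$ by the weight-$2$ path $v_{r,c}\to x_{r,c}\to w_{r,c}$; a right-to-down turn does not visit $x_{r,c}$ at all. A path with $t+1$ down-segments has exactly $t+1$ down-to-right turns, so shortcut savings are at most $t+1$. Combined with the previous paragraph, moving from $t=0$ to $t\ge 1$ incurs at least $t(t+1)\ge 2t$ extra vertical cost while granting at most $t$ additional shortcut opportunities beyond the single one available at $t=0$, so $t=0$ is strictly optimal. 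For the $t=0$ path the only usable shortcut is the one at $(k,j)$, which is present iff $M_{k,j}=1$, yielding total weight $(k-i)\cdot 2j + 2R\cdot(C-j+1)$ when $M_{k,j}=0$ and one less when $M_{k,j}=1$, as claimed. The main obstacle will be making the monotonicity exchange and the shortcut counting precise; once those are in place, the final distance drops out of the weight bookkeeping.
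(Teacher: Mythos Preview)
Your proposal is correct and takes a genuinely different route from the paper. The paper argues monotonicity with a similar one-line sketch, but then proceeds by induction on $i+j$: it case-splits on the first move out of $u_{i,j}$ (right to $u_{i,j+1}$, down-then-shortcut to $u_{i+1,j+1}$, or down to $u_{i+1,j}$), applies the induction hypothesis to each, and checks that the third option attains the claimed value. Your approach instead parameterizes all down-right monotone paths globally and computes the cost in closed form, using the nice observation that the horizontal contribution is a constant $2R(C-j+1)$ independent of where the vertical steps are taken. This is arguably more transparent---it explains structurally why the down-first path wins---whereas the paper's induction is more mechanical but easier to verify line by line.

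Two small points to tighten. First, you assume $c_0=j$, but a monotone path may begin with a right step; in that case $c_0>j$ and your vertical-cost lower bound only improves, so the omission is harmless but should be stated. Second, your ``strictly optimal'' claim compares $t(t+1)$ extra vertical cost to ``at most $t$ additional shortcut opportunities beyond the single one available at $t=0$''; when $M_{k,j}=0$ there is no shortcut at $t=0$, so a $t\ge 1$ path may gain up to $t+1$ shortcuts, and for $t=1$ the bound $t(t+1)-(t+1)=0$ gives only a tie. This does not affect the distance claim---non-strict optimality of the $t=0$ path is all you need---but the word ``strictly'' should be dropped.
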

\begin{proof}
    Consider any shortest path from any $u_{i,j}$ to $b_k$. Such a path must
    always go either ``right'' or ``down'' (if $i=k$ the path must always go
    right). Essentially for every step to the left we pay at least $4R-1$
    but can at most save $2R$: paying $2R$ going left and $2R$ going right,
    possibly saving $1$ with a shortcut, and saving $2$ for each vertical
    edge.

    Now we will show the claim by induction on the sum $i+j$. Clearly, for
    $u_{R,C}$ to $b_R$ the distance is exactly $2R$. Now consider $u_{i,j}$ and
    assume $k > i$ as the case of $k=i$ is trivial. There are three cases to
    consider:
    \begin{enumerate}
        \itemsep-2pt
        \item The path from $u_{i,j}$ goes through $w_{i,j}$ and then
            $u_{i,j+1}$. By the induction hypothesis this path has length at
            least
            \begin{align*}
                2R + (k-i)\cdot 2(j+1) + 2R\cdot (C - j) - 1
                \ge 2R \cdot (C - j + 1) + (k-i)\cdot 2j\ .
            \end{align*}
        \item The path from $u_{i,j}$ goes through $v_{i+1,j}$ and then
            $x_{i+1,j}$, $w_{i+1,j}$, and $u_{i+1,j+1}$. This path is only
            available if $M_{i+1,j} = 1$. If $k = i+1$, this distance is exactly
            \[
                2j-1 + 2R \cdot(C - j + 1)\ .
            \]
            Otherwise, by the induction hypothesis, it is at least
            \[
                (k-i-1)\cdot 2(j+1) + 2R\cdot(C - j) - 1 + 2j +
                2R - 1 \ge (k-i)\cdot 2j +2R\cdot(C - j + 1)
            \]
            for $k > i+1$
        \item The path from $u_{i,j}$ goes through $v_{i+1,j}$ and then
            $u_{i+1,j}$. By the induction hypothesis, if $M_{k,j} = 1$, the
            length of this path is
            \[
                (k-i-1)\cdot 2j + 2R\cdot (C - j + 1) - 1 + 2j =
                (k-i)\cdot 2j + 2R\cdot (C-j+1) - 1
            \]
            and otherwise it is
            \[
                (k-i-1)\cdot 2j + 2R\cdot(C-j+1) + 2j\ .
            \]
    \end{enumerate}
    It is easy to verify that taking the path down and right as illustrated in
    Figure~\ref{fig:grid} gives exactly the distances in the proposition, finishing the
    proof.
\end{proof}

The following useful property of our grid construction follows.

\begin{corollary}\label{cor:grid_dist}
    Let $M$ and $G_M$ be as in Proposition~\ref{prop:grid_dist}. Then for any $1\le k\le
    R,1\le j\le C$, the distance between $a_j$ and $b_k$ in $G_M$ is exactly
    determined by whether $M_{k,j} = 1$. In this case the distance is
    $2R\cdot(C-j+1) + 2jk - 1$ and it is $2R\cdot(C-j+1) + 2jk$ otherwise.
\end{corollary}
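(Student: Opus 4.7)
The plan is to reduce the corollary to Proposition~\ref{prop:grid_dist} by inspecting the short initial segment of any shortest path from $a_j$. In $G_M$ the node $a_j$ has a unique neighbor $v_{1,j}$, reached by the edge of weight $2j - 1$, so every path from $a_j$ must first pay this amount. From $v_{1,j}$ there are essentially two useful continuations: descend to $u_{1,j}$ via the weight-$1$ edge (``Case~A''), or, when $M_{1,j} = 1$, take the row-$1$ shortcut $v_{1,j} \to x_{1,j} \to w_{1,j}$ of total weight $2$ (``Case~B''). Any other move, such as returning toward $a_j$ or backtracking from $w_{1,j}$ to $u_{1,j}$, is strictly worse by the same accounting used in the proof of the proposition.

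In Case~A the distance from $a_j$ to $b_k$ equals $2j + \mathrm{dist}(u_{1,j}, b_k)$. For $k \ge 2$, Proposition~\ref{prop:grid_dist} evaluates this to $2jk + 2R(C - j + 1) - \chi$, where $\chi = 1$ if $M_{k,j} = 1$ and $\chi = 0$ otherwise, which matches the claim. For $k = 1$ the proposition does not apply directly, but the unique monotone path from $u_{1,j}$ along row~$1$ to $b_1$ has length exactly $2R(C - j + 1)$, giving the same formula with $\chi = 0$. In Case~B the path reaches $u_{1, j+1}$ with total cost $(2j - 1) + 1 + 1 + (2R - 2) = 2j + 2R - 1$. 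Applying Proposition~\ref{prop:grid_dist} to $\mathrm{dist}(u_{1, j+1}, b_k)$ for $k \ge 2$ and subtracting, Case~B exceeds Case~A by $2(k-1) - \chi' + \chi \ge 1$, where $\chi'$ indicates whether $M_{k, j+1} = 1$; hence Case~B is never optimal when $k \ge 2$. For $k = 1$ with $M_{1,j} = 1$, Case~B saves exactly $1$ relative to Case~A, yielding $2j + 2R(C - j + 1) - 1$, which again matches the claim.

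Finally I need to argue that no non-monotone path can be shorter, but this is immediate from the accounting already present in the proof of Proposition~\ref{prop:grid_dist}: any leftward or upward step costs at least $4R - 1$ while saves at most $2R$, so monotone right-and-down paths remain optimal starting from $a_j$. The only subtle point I anticipate is the boundary case $k = 1$, since the proposition is stated only for $i < k$; I would dispose of this either by the direct one-line calculation along row~$1$ indicated above, or by observing that the proposition's induction extends by one step to a ``virtual row~$0$'' whose nodes are the $a_j$, with the weight-$(2j-1)$ edge playing the role of the heavy vertical edge.
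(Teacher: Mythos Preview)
Your proposal is correct and aligns with the paper's intent. The paper gives no explicit proof of this corollary, stating only that it ``follows'' from Proposition~\ref{prop:grid_dist}; the one-line justification the authors have in mind is exactly your ``virtual row~$0$'' observation, i.e.\ that $a_j$ plays the role of $u_{0,j}$ and the inductive argument of Proposition~\ref{prop:grid_dist} extends verbatim to $i=0$. Your case analysis at $v_{1,j}$ is a more explicit unpacking of that same idea.

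One minor arithmetic slip: in your Case~B versus Case~A comparison for $k\ge 2$, the difference is actually $2k-3-\chi'+\chi$, not $2(k-1)-\chi'+\chi$. For $k\ge 2$ this is $\ge 0$ rather than $\ge 1$ (equality when $k=2$, $\chi'=1$, $\chi=0$), so Case~B can tie Case~A but never beat it. This does not affect your conclusion, since you only need Case~A to realize the minimum.
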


The following generalization for matrices with integer weights will be useful
when reducing from APSP.

\begin{definition}\label{defn:grid_weight}
    Let $M$ be a $R\times C$ matrix with integer weights in $\{0,\ldots,
    X\}$. We will call the following construction the \emph{grid embedding} of
    $M$.

    Let $G_M$ be the grid embedding from Definition~\ref{defn:grid} for the all ones
    matrix of size $R\times C$ and multiply the weight of each edge by
    $X^2$. Furthermore, for each edge $(v_{i,j},x_{i,j})$ increase its weight
    by $M_{i,j}$.
\end{definition}

\begin{corollary}\label{cor:grid_dist_wt}
    Let $M$ be a $R\times C$ matrix with integer weights in $\{0,\ldots,X\}$
    and let $G_M$ be its grid embedding. Then for any $1\le k\le R,1\le j\le C$,
    the distance between $a_j$ and $b_k$ in $G_M$ is exactly
    \[
        X^2\cdot(2R\cdot(C-j+1) + 2jk -1) + M_{k,j}
    \]
\end{corollary}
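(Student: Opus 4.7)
The plan is to reduce to Proposition~\ref{prop:grid_dist} by a linearity argument. For each edge $e$ of $G_M$, write its weight as $w^{\text{new}}_e = X^2 \cdot w^{\text{base}}_e + \delta_e$, where $w^{\text{base}}_e$ is the weight $e$ would have in the grid embedding of the all-ones matrix of the same shape, and $\delta_e = M_{i,j}$ on each shortcut edge $(v_{i,j}, x_{i,j})$ and $\delta_e = 0$ elsewhere. Then any $a_j \to b_k$ path $P$ has total length $X^2 \cdot w^{\text{base}}(P) + \delta(P)$, with $\delta(P) \ge 0$.

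For the upper bound I would exhibit the canonical path from Proposition~\ref{prop:grid_dist}: descend straight down column $j$ from $a_j$ to $u_{k-1,j}$, take the single shortcut $v_{k,j} \to x_{k,j} \to w_{k,j}$, and then traverse row $k$ rightward to $b_k$. By Corollary~\ref{cor:grid_dist} its base weight is exactly $2R(C-j+1) + 2jk - 1$, and since the only shortcut edge it visits is the one at $(k,j)$, its $\delta$-contribution is exactly $M_{k,j}$, matching the target quantity. For the lower bound, consider any $a_j \to b_k$ path $P$. If $w^{\text{base}}(P) > 2R(C-j+1) + 2jk - 1$, then integrality of the base weights gives $w^{\text{base}}(P) \ge 2R(C-j+1) + 2jk$, and hence the length of $P$ is at least $X^2(2R(C-j+1) + 2jk - 1) + X^2 \ge X^2(2R(C-j+1) + 2jk - 1) + M_{k,j}$, using $M_{k,j} \le X \le X^2$ (the degenerate $X \le 1$ cases are trivial).

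The remaining case is $w^{\text{base}}(P) = 2R(C-j+1) + 2jk - 1$, where I need $\delta(P) \ge M_{k,j}$; this is the main obstacle. I would obtain it by reading uniqueness of the base-optimal $a_j \to b_k$ path out of the case analysis in the proof of Proposition~\ref{prop:grid_dist}: Cases~1 and~2 (for $k > i+1$) are shown to yield paths that are \emph{strictly} longer than $(k-i) \cdot 2j + 2R(C-j+1) - 1$, so any base-optimal path from $u_{i,j}$ to $b_k$ must apply Case~3 at every recursive step until reaching $u_{k-1,j}$, at which point only Case~2 (the shortcut at $(k,j)$) attains the bound. Thus the canonical path is the unique base-optimal one, every base-optimal $P$ traverses the shortcut edge at $(k,j)$, and $\delta(P) \ge M_{k,j}$, completing the lower bound.
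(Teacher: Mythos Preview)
Your proof is correct and follows essentially the same decomposition as the paper: write each edge weight as $X^2$ times its base weight plus a nonnegative surcharge, exhibit the canonical path for the upper bound, and for the lower bound split on whether a competing path is base-optimal.

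The only real difference is in how you handle the base-optimal case. You prove \emph{uniqueness} of the base-optimal $a_j\to b_k$ path by re-inspecting the strict inequalities in the case analysis of Proposition~\ref{prop:grid_dist}. The paper instead obtains the weaker fact it actually needs --- that every base-optimal $a_j\to b_k$ path traverses the shortcut at $(k,j)$ --- directly from Corollary~\ref{cor:grid_dist}: in the boolean grid with that one shortcut removed (i.e., the $(k,j)$ entry set to $0$), the $a_j$--$b_k$ distance is $2R(C-j+1)+2jk$, so any path avoiding that shortcut already has base length one more than optimal. This immediately gives $\delta(P)\ge M_{k,j}$ for base-optimal $P$ without uniqueness, and avoids having to re-verify strictness case by case and to separately treat the row-$k$ tail (the $i=k$ segment from $u_{k,j+1}$ to $b_k$, which Proposition~\ref{prop:grid_dist} as stated does not cover). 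Your route works, but the paper's is shorter and uses Corollary~\ref{cor:grid_dist} as a black box rather than reopening the inductive proof.
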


Corollary~\ref{cor:grid_dist_wt} follows from Corollary~\ref{cor:grid_dist} by observing that any
path from $a_j$ to $b_k$ not using the shortcut at intersection $(k,j)$ has
distance at least $X^2\cdot(2R\cdot(C-j+1)+2jk)$ and since $M_{k,j} < X^2$
this distance is longer than using the shortcut. We remark that it would have been sufficient to multiply the weights by $(X+1)$ instead of $X^2$, but we do so to simplify a later argument.

\section{Hardness of dynamic APSP in planar graphs}\label{sec:apsp}
We will first show the following, simpler theorem and then generalize it to
show trade-offs between query and update time.

\begin{theorem}\label{thm:apsp_simple}
    No algorithm can solve the dynamic APSP problem in planar graphs on $N$
    nodes with amortized update and query time $O(N^{\frac{1}{2}-\eps})$ for
    any $\eps > 0$ unless Conjecture~\ref{conj:apsp} is false. This holds even
    if only weight updates are allowed.
\end{theorem}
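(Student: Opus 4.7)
The plan is to reduce from the $(\min,+)$-matrix multiplication problem, which is subcubic equivalent to APSP \cite{VassW10}. Given $n\times n$ matrices $A,B$ with integer entries in $[0,X]$ for $X=\poly(n)$, I build a planar template graph $H$ on $N=O(n^2)$ nodes with designated source nodes $s_1,\dots,s_n$ and sink nodes $t_1,\dots,t_n$, and show that the product $C=A\oplus B$ can be recovered from $O(n^2)$ weight updates and $O(n^2)$ distance queries on $H$. If the dynamic APSP algorithm supports both in amortized time $O(N^{1/2-\eps})=O(n^{1-2\eps})$, the whole computation takes $O(n^{3-2\eps})$ time, contradicting Conjecture~\ref{conj:apsp}.

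The template $H$ chains two grid embeddings from Definition~\ref{defn:grid_weight}. I place $G_{A^T}$ on the left, with its ``$a$''-nodes on top and ``$b$''-nodes on the right, and $G_{B^T}$ rotated $90^\circ$ counterclockwise immediately to the right of $G_{A^T}$, so its rotated ``$a$''-nodes line up vertically with the ``$b$''-nodes of $G_{A^T}$ in matching order. For each $k\in\{1,\ldots,n\}$ I add a single bridge edge $e_k=(b^{A^T}_k,\,a^{B^T}_k)$ carrying a programmable weight slot $c_k$, together with a fixed additive pad large enough to force any shortest path to traverse exactly one bridge. The result is planar and in fact a subgraph of an $O(n)\times O(n)$ grid. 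Setting $s_i:=a^{A^T}_i$ and $t_j:=b^{B^T}_j$ and applying Corollary~\ref{cor:grid_dist_wt} on each side yields
\[
d_H(s_i,t_j)=X^2\,\alpha(i)+\min_{k\in[n]}\bigl[\,2X^2\,k(i+j-n)+c_k+A[i,k]+B[k,j]\,\bigr],
\]
where $\alpha(i)$ depends only on $i$ and $n$ (and incorporates the pad).

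The parasitic $2X^2 k(i+j-n)$ term, which is \emph{linear} in $k$, is eliminated by processing the $2n-1$ possible values of $s:=i+j$ in separate rounds. In round $s$ I update $c_k:=2X^2(n-k)(s-n)$ if $s\ge n$ and $c_k:=2X^2 k(n-s)$ otherwise; in either case all $c_k\ge 0$ and the combination $2X^2 k(i+j-n)+c_k$ reduces to a single constant $D_s$ depending only on $s$ and $n$. Consequently, for every pair $(i,j)$ with $i+j=s$ we have $d_H(s_i,t_j)=X^2\alpha(i)+D_s+C[i,j]$, and since $C[i,j]\le 2X<X^2$ the value $C[i,j]$ can be read off by subtracting the known constants. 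Each round costs $O(n)$ bridge-weight updates and $O(n)$ queries (one per matching pair); combined with the initial $O(n^2)$ weight updates that install $A$ and $B$ into the $(v_{i,j},x_{i,j})$ slots of the two grid embeddings, the total is $O(n^2)$ weight updates and $O(n^2)$ queries, producing a running time of $O(n^2(u+q))$.

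The main obstacle will be getting the transposition/rotation right so that $i$, $j$, and $k$ align with both the $(\min,+)$ semantics and the planar chaining, and rigorously establishing the ``exactly one bridge'' property by a case analysis using the explicit grid distances of Proposition~\ref{prop:grid_dist}. A conceptual point that makes the whole strategy succeed is that the grid's base contribution to $d(a_j,b_k)$ is linear in the product $jk$, so at fixed $s=i+j$ the unwanted $k$-dependence is linear in $k$; this is precisely what lets a single per-$k$ offset $c_k$ cancel the term \emph{simultaneously} for every pair $(i,j)$ with $i+j=s$, so $O(n)$ rounds suffice. Any superlinear dependence on $k$ would break this balance between round count and round cost.
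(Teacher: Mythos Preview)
Your approach is correct and yields the same $O(n^2)$ operation count on an $N=\Theta(n^2)$-node planar graph, but it is genuinely different from the paper's proof. The paper embeds \emph{only} $B$ in a grid $G_B$, places a \emph{mirrored, shortcut-free} copy $G'_B$ to its right, and puts one row of $A$ at a time into the bridge weights $(b_k,b'_k)$. The point of the mirror is that the $k$-dependent base terms from the two grids are $2X^2jk$ and $2X^2(n{-}j{+}1)k$, which sum to $2X^2(n{+}1)k$ \emph{independently of $j$}; a single bridge offset of $X^2\cdot 2(n{+}1)(n{-}k)+A_{i,k}$ then kills all $k$-dependence simultaneously for every column $j$, so each of the $n$ phases consists of $n$ updates followed by $n$ queries over all $j$. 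Your construction instead embeds both $A^T$ and $B^T$ in grids and uses the bridges only for normalization; because both grids carry data, the residual $k$-term $2X^2k(i{+}j{-}n)$ depends on the \emph{sum} $s=i+j$, forcing you into $2n{-}1$ anti-diagonal rounds. Both routes hit the same budget, but the paper's mirrored-dummy trick is cleaner and, more importantly, is what drives the later trade-off Theorem~\ref{thm:apsp_tradeoff}: with unbalanced $n\times n^\beta$ and $n^\beta\times n^\alpha$ matrices, embedding only $B$ keeps $N=\Theta(n^{\alpha+\beta})$ while the update/query counts separate into $n^{1+\beta}$ and $n^{1+\alpha}$. Your two-grid variant would have $N=\Theta(n^{1+\beta})$ once $A$ is embedded, so the generalization is less direct.

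One small warning on the detail you already flag: with a literal $90^\circ$ CCW rotation the $a^{B^T}$-nodes land on the left edge in the order $a^{B^T}_n,\ldots,a^{B^T}_1$ from top to bottom, while $b^{A^T}_1,\ldots,b^{A^T}_n$ run top to bottom; the bridges $(b^{A^T}_k,a^{B^T}_k)$ then cross. You need either a reflection rather than a rotation, or to match $b^{A^T}_k$ with $a^{B^T}_{n+1-k}$ and compensate by column-reversing $A$ (equivalently, row-reversing $A^T$) before embedding. This is exactly the ``getting the transposition/rotation right'' step you anticipate, and once done your displayed formula and the anti-diagonal cancellation go through as written.
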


The main idea in proving Theorem~\ref{thm:apsp_simple} is to reduce from the APSP problem by
first reducing to $(\min,+)$-Matrix-Mult and use the grid construction
from Section~\ref{sec:grid} to represent the matrices to be multiplied. We then
perform several shortest paths queries to simulate the multiplication process.
Below, we first present a naïve and faulty approach explaining the main ideas
of the reduction. We then show how to mend this approach giving the desired
result.

\begin{attempt}\label{att:apsp_bad}
    Consider the following algorithm for solving an instance, $A\oplus B$ of
    the $(\min,+)$-Matrix-Mult problem, where $A$ and $B$ are $n\times n$
    matrices. We may assume that $A$ and $B$ have integer weights in
    $\{0,\ldots, X\}$ for some $X = poly(n)$.

    We let the initial graph of the problem be the grid embedding $G_B$ of $B$
    according to Definition~\ref{defn:grid_weight} along with a special vertex $t$. Also
    add the edges $(b_k,t)$ for each $1\le k\le n$.
    Now we wish to construct $C = A\oplus B$ one row at a time. Such a row is a
    $(\min,+)$-product of a row in $A$ and the entire matrix $B$. Thus, for
    each row, $i$, of $A$ we have a phase as follows:
    \begin{enumerate}
        \itemsep-2pt
        \item For each $1\le k\le n$ update the weight of the edge $(b_k, t)$
            to be $A_{i,k}$.
        \item For each $1\le j\le n$ query the distance between $a_j$ and $t$.
    \end{enumerate}

    The idea of each phase is that the distance between $a_j$ and $t$ should
    correspond to the value of $C_{i,j} = \min_k A_{i,k}+B_{k,j}$. Observe, that the distance from $a_j$
    to $t$ using the edge $(b_k,t)$ is exactly
    \[
        X^2\cdot(2n\cdot(n-j+1) + 2jk -1) + B_{k,j} + A_{i,k}
            \]
        by Corollary~\ref{cor:grid_dist_wt}. The dominant term in this expression increases with $k$ and thus no
    matter what $B_{k,j}$ and $A_{i,k}$ are (for $k>1$), the shortest path from $a_j$ to $t$
    will simply pick $k=1$ minimizing the above expression.
     If we instead set
    the weight of each edge $(b_k,t)$ to $X^2\cdot 2j(n-k) + A_{i,k}$ we get
    the distance of using this edge to be
    \[
        X^2\cdot(2n\cdot(n-j+1) + 2jk -1) + B_{k,j} + A_{i,k} + X^2\cdot
        2j(n-k) = X^2\cdot(2n(n+1) -1) + B_{k,j} + A_{i,k}\ .
    \]
    It follows that the shortest path from $a_j$ to $t$ is free to pick any $k$ while only affecting the $B_{k,j} + A_{i,k}$ term, which means that the shortest distance will be achieved by picking the $k$ minimizing this term, which would give us exactly $X^2\cdot(2n(n+1)-1) + C_{i,j}$. This
    approach therefore allows us to correctly calculate $C = A\oplus B$.
    However, the weight of the edge $(b_k,t)$ now depends on which $a_j$
    we are querying implying that we have to update this weight for each $a_j$
    leading to a total of $O(n^3)$ updates. By using this approach we are thus
    not able to make any statement about the time required for updates. We may
    try to assign edges and weights differently, but such approaches run into
    similar issues.

    Observe that the graph created has $N = O(n^2)$ nodes. Thus, if we were
    able to perform only $O(n^2)$ total queries and updates the result of
    Theorem~\ref{thm:apsp_simple} would follow.\hfill $\Diamond$
\end{attempt}

In order to circumvent this dependence on $j$ when assigning weights to the
edges $(b_k,t)$ we instead replace $t$ by another grid whose purpose is to
``normalize'' the distance for each $a_j$.
 By doing this we can connect the
grids with edges whose weight is independent of $j$.
This step deviates significantly from the construction of \cite{GavoillePPR04} and is inspired by the grid tiling framework of Marx \cite{Marx07,Marx13}.

\begin{proof}[Proof of Theorem~\ref{thm:apsp_simple}]
    We follow the same approach as in Attempt~\ref{att:apsp_bad}, but with a few
    changes. Define the initial graph $G$ as follows: Let $G_B$ be as before
    and let $G'_B$ be the grid embedding of $B$ mirrored along the vertical
    axis with all shortcuts removed. Now for each $1\le k\le n$ add the edge
    $(b_k,b'_k)$ and define $G$ to be this graph.

    Now we perform a phase for each row $i$ of $A$ as follows:
    \begin{enumerate}
        \itemsep-2pt
        \item For each $1\le k\le n$ set the weight of the edge $(b_k,b'_k)$ to
            be $X^2\cdot (2(n+1)(n-k)) + A_{i,k}$.
        \item For each $j$ query the distance between $a_j$ and $a'_{n-j+1}$.
    \end{enumerate}
    An example of this construction for $n=3$ can be seen in Figure~\ref{fig:apsp_red}.
    \begin{figure}[htbp]
        \centering
        \includegraphics[width=.8\textwidth]{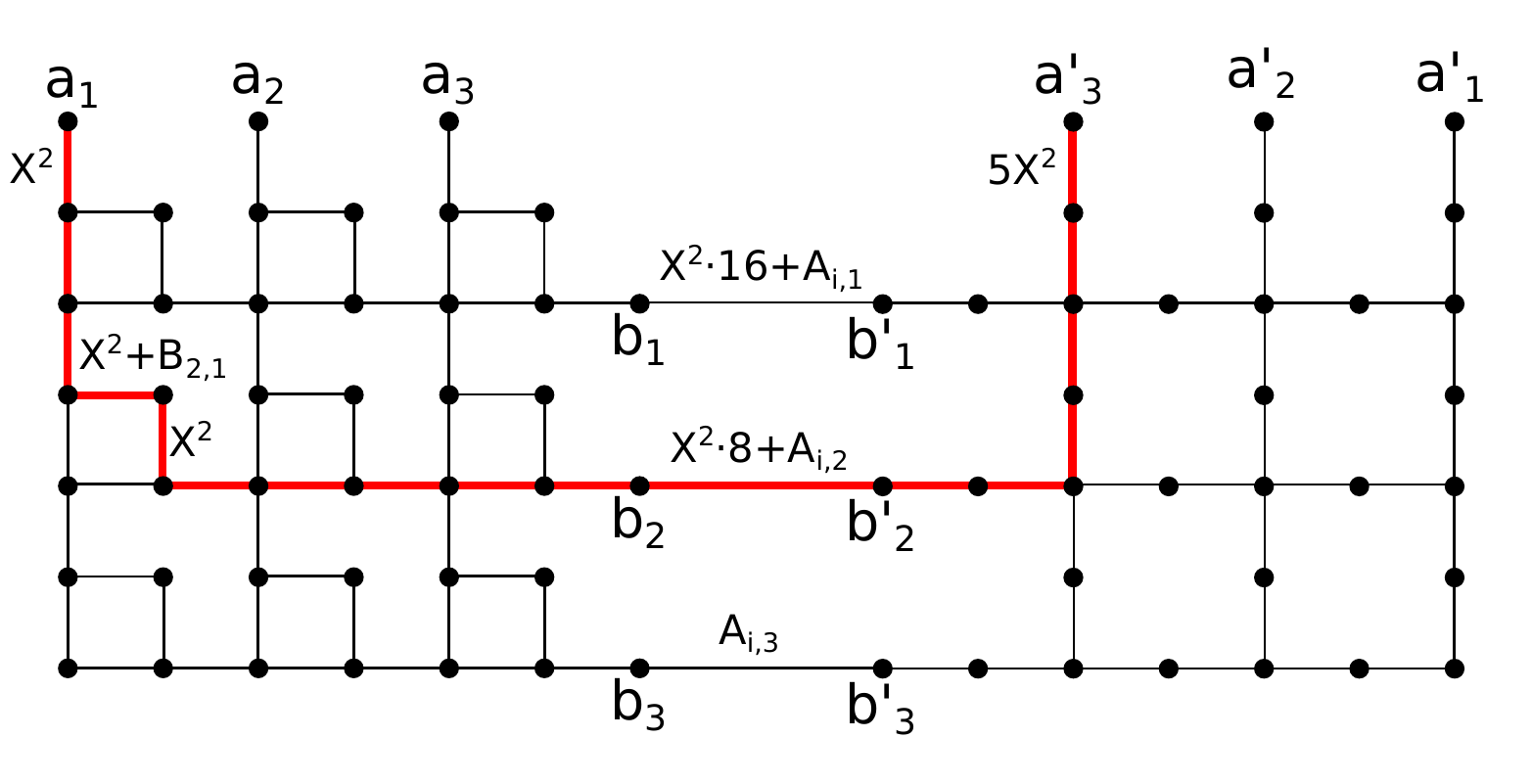}
        \caption{Example of a phase in the graph $G$ in the reduction of
        Theorem~\ref{thm:apsp_simple}. The highlighted path illustrates a shortest
        path between $a_1$ and $a'_3$ as an example of a query. Most edge
        weights have been omitted for clarity. }
        \label{fig:apsp_red}
    \end{figure}

    From the query between nodes $a_j$ and $a'_{n-j+1}$ above during phase $i$
    we can determine the entry $C_{i,j}$ of the output matrix. To see this,
    consider the distance from $a_j$ to $a'_{n-j+1}$ at the time of query. This
    path has to go via some edge $(b_k,b'_k)$. From Corollary~\ref{cor:grid_dist_wt} we
    know that this distance is exactly
    \begin{align*}
        d_G(a_j,a'_{n-j+1})
        &= d(a_j,b_k) + w(b_k,b'_k) + d(b'_k,a'_{n-j+1}) \\
        &= X^2\cdot (2n\cdot(n-j+1) + 2jk -1) + B_{k,j} + X^2\cdot(2(n+1)(n-k))
        + A_{i,k} \\
        &\qquad + X^2\cdot(2n\cdot j + 2(n-j+1)k) \\
        &= X^2\cdot 2n(n+1) + X^2\cdot2k(n+1) + X^2\cdot 2(n+1)(n-k) + B_{k,j}
        + A_{i,k} - X^2 \\
        &= X^2\cdot 4n(n+1) - X^2 + B_{k,j} + A_{i,k}\ .
    \end{align*}
    The crucial property that our construction achieves is that the dominant term of this expression is independent of $k$.
    Thus, the shortest path will choose to go through the edge $(b_k,b'_k)$ that minimizes $B_{k,j} + A_{i,k}$, implicitly giving us $C_{i,j}$.
	Subtracting $X^2\cdot(4n(n+1) - 1)$ from the queried
    distance gives exactly the value of $C_{i,j}$ and the algorithm therefore
    correctly computes $C$.

    Following the analysis from Attempt~\ref{att:apsp_bad} we have that any algorithm
    with an amortized running time of $O(N^{\frac{1}{2}-\eps})$ for
    both updates and queries contradicts Conjecture~\ref{conj:apsp}.
\end{proof}

\subsection{Trade-offs}\label{sec:tradeoffs}
Theorem~\ref{thm:apsp_simple} above shows that no algorithm can perform both updates
and queries in amortized time $O(N^{\frac{1}{2}-\eps})$ unless Conjecture~\ref{conj:apsp}
is false. We will now show how to generalize these ideas to show
Theorem~\ref{thm:apsp_tradeoff}.
\begin{proof}[Proof of Theorem~\ref{thm:apsp_tradeoff}]
    The proof follows the same structure as the proof for
    Theorem~\ref{thm:apsp_simple}, but instead of reducing from $(\min,+)$-Matrix-Mult
    on $n\times n$ matrices we reduce from an unbalanced version.

    Let $A$ and $B$ be $n\times n^\beta$ and $n^\beta\times n^\alpha$ matrices
    respectively for some $0 < \alpha,\beta \le 1$. We define the initial graph
    $G$ from $B$ in the same manner as in Theorem~\ref{thm:apsp_simple}. We then have
    a phase for each row $i$ of $A$ as follows:
    \begin{enumerate}
        \itemsep-2pt
        \item For each $1\le k\le n^\beta$ set the weight of the edge
            $(b_k,b'_k)$ to be $X^2\cdot (2(n^\alpha+1)(n^\beta-k)) + A_{i,k}$.
        \item For each $1\le j\le n^\alpha$ query the distance between $a_j$
            and $a'_{n^\alpha-j+1}$.
    \end{enumerate}
    The entry $C_{i,j}$ is exactly the distance $d_G(a_j,a'_{n^\alpha-j+1})$
    from the $i$th phase minus $X^2\cdot (4n^\beta(n^\alpha+1) - 1)$.
    The correctness of the above reduction follows directly from the proof of
    Theorem~\ref{thm:apsp_simple} as well as Corollary~\ref{cor:grid_dist_wt}.

    Now observe that the graph $G$ from the above reduction has $N =
    \Theta(n^{\alpha+\beta})$ nodes and we perform a total of $O(n^{1+\alpha})$
    queries and $O(n^{1+\beta})$ updates\footnote{We also perform
    $O(n^{\alpha+\beta})$ updates to create the initial graph (depending on the
    model), however we will choose $\alpha$ and $\beta$ such that this term is
    dominated.} -- that is, at most $O(n)$ updates per row and $O(n)$ queries per
    column. Any algorithm solving this problem must use total time
    $n^{1+\alpha+\beta-o(1)}$ time unless Conjecture~\ref{conj:apsp} is false. It follows
    that either updates must take $n^{\alpha-o(1)}$ amortized time or
    queries must take $n^{\beta-o(1)}$ amortized time.

    Assume now that an algorithm exists such that queries take $O(N^\gamma)$
    amortized time for any $0 < \gamma < 1$. We wish to show that this algorithm
    cannot perform updates in amortized time $O(N^{1-\gamma-\eps})$ for any
    $\eps > 0$. Pick $\beta = \gamma + \eps/2$ and set $\alpha = 1 - \beta$. We
    now use the above reduction to create a dynamic graph $G$ with $N =
    O(n^{\alpha+\beta}) = O(n)$ nodes. Since queries do not take $n^{\beta -
    o(1)}$ time it follows from the above discussion that updates must take
    $n^{\alpha - o(1)} = n^{1-\gamma -\eps/2 - o(1)}$ time. Since this is
    polynomially greater than $O(N^{1-\gamma-\eps})$ the claim follows.
\end{proof}

\section{Hardness of dynamic maximum weight matching in bipartite planar
graphs}
In this section we will demonstrate the generality of our reduction framework
by showing Theorem~\ref{thm:mwm}.
\begin{proof}[Proof of Theorem~\ref{thm:mwm}]
    We start by showing how to reduce from $(\min,+)$-Matrix-Mult to minimum
    weight perfect matching, where the weight of such a matching corresponds to
    the shortest path distance between $a_j$ and $a'_{n-j+1}$ similar to the
    proof of Theorem~\ref{thm:apsp_tradeoff}. We then describe how to use this
    reduction further to get a problem instance for maximum weight matching.

    Let $A,B$ be an instance to the $(\min,+)$-Matrix-Mult problem of sizes
    $n\times n^\beta$ and $n^\beta\times n^\alpha$ respectively. Consider the
    grid embedding $G_B$ of $B$. We first replace each node of $G_B$ by two
    nodes connected by an edge of weight $0$.
    For $a_j$, $u_{i,j}$, $x_{i,j}$, and $v_{i,j}$ denote the corresponding
    nodes with superscript $d$ and $u$ (for ``down'' and ``up''). For $b_i$ and
    $w_{i,j}$ denote the corresponding nodes with superscript $l$ and $r$ (for
    ``left'' and ``right''). Now, for each original edge in $G_B$ we replace it
    as follows keeping its weight:
    \begin{multicols}{2}
    \begin{itemize}
        \itemsep-2pt
        \item $(u_{i,j},v_{i,j})\rightarrow (u_{i,j}^u, v_{i,j}^d)$
        \item $(u_{i,j},w_{i,j})\rightarrow (u_{i,j}^d, w_{i,j}^l)$
        \item $(u_{i,j},v_{i+1,j})\rightarrow (u_{i,j}^d, v_{i+1,j}^u)$
        \item $(u_{i,j},w_{i,j-1})\rightarrow (u_{i,j}^u, w_{i,j-1}^r)$
        \item $(v_{i,j},x_{i,j})\rightarrow (v_{i,j}^d, x_{i,j}^u)$
        \item $(x_{i,j},w_{i,j})\rightarrow (x_{i,j}^d, w_{i,j}^l)$
        \item $(a_j,v_{0,j})\rightarrow (a_j^d,v_{0,j}^u)$
        \item $(w_{i,C},b_i)\rightarrow (w_{i,C}^r,b_i^l)$
    \end{itemize}
    \end{multicols}
    This construction is illustrated in Figure~\ref{fig:match_grid}. We call this
    modified grid structure $\bar{G}_B$. Observe that there are no edges
    between ``up'' and ``left'' vertices or between ``down'' and ``right''. It
    follows that the graph is bipartite and that these two sets of nodes make
    up the two partitions.
    \begin{figure}[htbp]
        \centering
        \includegraphics[width=.8\textwidth]{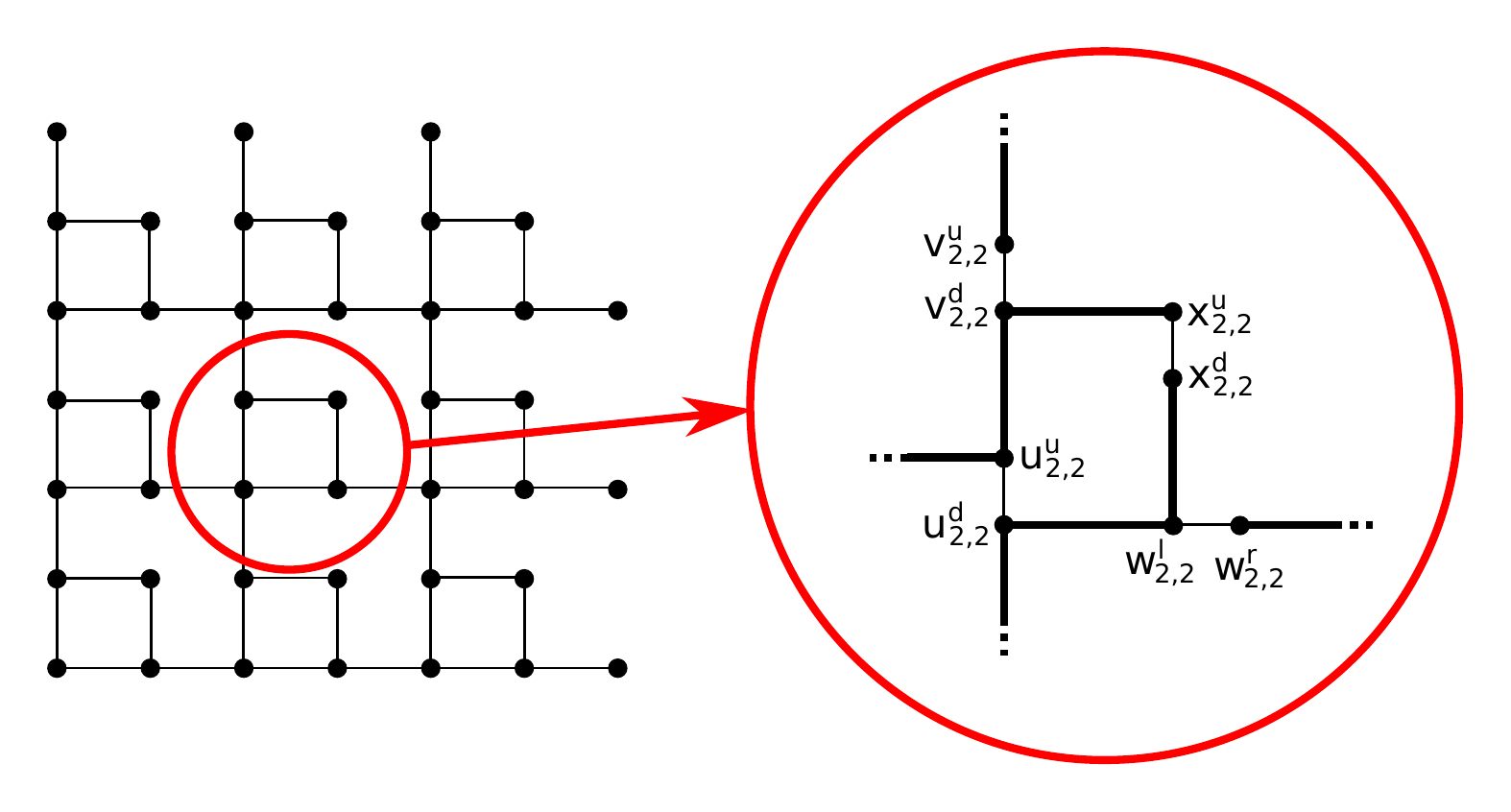}
        \caption{Grid construction for minimum weight perfect matching. Thick
        edges correspond to original edges and have the same weight as in
        $G_B$. Thin edges have weight $0$.}
        \label{fig:match_grid}
    \end{figure}

    We now replace the grids $G_B$ and $G'_B$ by $\bar{G}_B$ and $\bar{G}'_B$
    in the initial graph $G$ from the proof of Theorem~\ref{thm:apsp_tradeoff}. The
    edges $(b_k,b'_k)$ are replaced by $(b^r_k,b^{\prime l}_k)$.
    We will use the following observation.
    \begin{proposition}\label{prop:matching}
        The graph resulting from joining two grids $\bar{G}_B$ in the way of
        Figure~\ref{fig:apsp_red} has a unique perfect matching.
    \end{proposition}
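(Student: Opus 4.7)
The plan is to exhibit a natural candidate perfect matching $M_0$ and prove uniqueness by iterated degree-$1$ forcing. Let $M_0$ consist of every internal weight-$0$ edge $(v^u,v^d)$ or $(w^l,w^r)$, one per split original node; since every vertex of the joined graph lies on exactly one internal edge, $M_0$ is a perfect matching, with the joining edges $(b^r_k, b^{\prime l}_k)$ simply unused.

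For uniqueness, I would argue as follows. The vertex $a_j^u$ has degree $1$ in each copy: the only edge of $\bar G_B$ incident to $a_j$ is $(a_j, v_{1,j}) \mapsto (a_j^d, v_{1,j}^u)$, so $a_j^u$ has no external neighbor and $(a_j^u, a_j^d)$ must lie in every perfect matching $M$. Now sweep the left grid row-by-row from top to bottom, and within each row left-to-right. In row $i$, once $u_{i-1,j}$ is matched internally (or $i=1$ and $a_j$ is already matched internally), $v_{i,j}^u$ has only its internal edge remaining and is forced. If the shortcut node $x_{i,j}$ is present, then $x_{i,j}^u$'s sole external neighbor $v_{i,j}^d$ is now matched, so $(x_{i,j}^u, x_{i,j}^d)$ is forced next. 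Then $u_{i,j}^u$ has external neighbors $v_{i,j}^d$ (matched) and $w_{i,j-1}^r$ (matched by the previous step of the sweep, or absent if $j=1$), and so $u_{i,j}$ is forced; finally $w_{i,j}^l$'s external neighbors $u_{i,j}^d$ and, if present, $x_{i,j}^d$ are both matched, forcing $w_{i,j}$ internal.

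When this cascade reaches column $C$, $w_{i,C}$ is matched internally, so the edge $(w_{i,C}^r, b_i^l)$ cannot lie in $M$; hence $b_i^l$ has only its internal partner $b_i^r$ available, and the joining edge $(b_i^r, b_i^{\prime l})$ is never in $M$. The mirrored cascade in the right grid, seeded by the degree-$1$ vertices $a_j^{\prime u}$, forces every split pair there internally as well, yielding $M = M_0$. The main subtlety is handling the shortcut nodes $x_{i,j}$, which introduce extra degree-$2$ vertices that could in principle break the cascade; as above, processing $x_{i,j}$ right after $v_{i,j}$ and before $u_{i,j}$ in the sweep resolves this cleanly.
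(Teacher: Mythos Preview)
Your proof is correct and follows essentially the same degree-$1$ forcing (``peeling'') argument as the paper. The only cosmetic difference is the traversal order: the paper peels column-by-column (down column $1$ through the $a,v,u$ nodes, then the $x,w$ nodes, then restarts at $a_2$), whereas you peel row-by-row left-to-right, handling each $x_{i,j}$ immediately after $v_{i,j}$; both orderings are valid and yield the same forced matching $M_0$.
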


    \begin{proof}
    It is easy to see that simply matching all weight $0$ edges gives a
    perfect matching. Thus we need to show that this is the only perfect
    matching. We will show the claim by a simple ``peeling'' argument.

    Observe that $a_1^u$ only has one incident edge, so the edge
    $(a_1^u,a_1^d)$ must be in any perfect matching and we may ``peel'' away
    these two nodes. It now follows that $v_{1,1}^u$ only has one adjacent
    edge, so $(v_{1,1}^u,v_{1,1}^d)$ has to be in any perfect matching and we
    may peel away these nodes. Now $u_{1,1}^u$ only has one adjacent edge and
    so on for $v_{2,1}^u$, $u_{2,1}^u$, etc. This peels away the entire
    first column. Now each $x^u_{i,1}$ has only one adjacent edge matching
    this leaves each $w_{i,1}^l$ with only one adjacent edge. Peeling these
    nodes away leaves us with a smaller grid and we may start the argument
    over with $a_2$.

    By doing this we see that the edge joining $b_i^r$ and $b_i^{\prime l}$
    cannot be in a perfect matching as $(b_i^r,b_i^l)$ has to be. Thus
    we can repeat the same argument on the second grid.
\end{proof}

    We now add two additional nodes $s$ and $t$ to the initial graph and perform
    a phase for each row $i$ of $A$ as follows:
    \begin{enumerate}
        \itemsep-2pt
        \item For each $1\le k\le n^\beta$ set the weight of the edge
            $(b^r_k,b^{\prime l}_k)$ to be $X^2\cdot (2(n^\alpha+1)(n^\beta-k))
            + A_{i,k}$.
        \item For each $1\le j\le n^\alpha$ do the following three steps: 1)
            add the edges $(s,a^u_j)$ and $(t,a^{\prime u}_{n^\alpha-j+1})$, 2)
            query the minimum weight perfect matching, 3) delete the two edges.
    \end{enumerate}
    Since the edges $(s,a^u_j)$ and $(t,a^{\prime u}_{n^\alpha-j+1})$ have to
    be in any perfect matching this leaves $a^d_j$ and $a^{\prime
    d}_{n^\alpha-j+1}$ unmatched. Any perfect matching now has to ``connect''
    these two nodes by a path of original (weight $>0$) edges.
    The weight of a perfect matching in $G$ then corresponds to the length of a
    shortest path from $a_j$ to $a'_{n^\alpha-j+1}$ in the graph from the proof
    of Theorem~\ref{thm:apsp_tradeoff}. It follows that we get the same trade-offs for
    minimum weight perfect matching as for APSP with the exception that the
    trade-off only holds when $q(N)\ge u(N)$ since we perform $O(1)$ updates
    for each query.

    To show the same result for maximum weight matching we may simply perform
    the following two changes: 1) pick a sufficiently large integer $y$ and set
    the weight of each edge to $y$ minus its weight in the above reduction, and
    2) when adding the edges $(s,a^u_j)$ and $(t,a^{\prime u}_{n^\alpha-j+1})$
    assign them weight $y^2$ such that any maximum weight matching has to
    include these two edges and will have weight
    \[
        y^2 + \frac{N-4}{2}\cdot y - d_{G*}(a_j,a'_{n^\alpha-j+1})\ ,
    \]
    where $G*$ denotes the corresponding graph in the proof of
    Theorem~\ref{thm:apsp_tradeoff}.
\end{proof}

\section{Unweighted}\label{sec:unweighted}
The proofs of the previous sections rely heavily on the weighted grid from
Section~\ref{sec:grid}. We may generalize the ideas to the unweighted case by instead
using the grid of Definition~\ref{defn:grid} and subdividing the edges giving us somewhat
weaker bounds. This gives us Theorem~\ref{thm:apsp_unit}.

The problem we reduce from is the online matrix-vector problem from
\cite{HenzingerKNS15}. We may define this problem as follows: Let $M$ be a
$n\times n$ matrix and let $v^1,\ldots,v^n$ be $n$ boolean vectors arriving in
an online fashion. The task is to pre-process $M$ such that we can output the
product $Mv^i$ for each $i$ before seeing $v^{i+1}$. It was conjectured in
\cite{HenzingerKNS15} that this problem takes $n^{3-o(1)}$ time, while the best known upper bound is $n^3/2^{\Omega(\sqrt{\log{n}})}$ \cite{LarsenW16}.
Known reductions from \cite{VassW10}
show that this conjecture implies a $n^{1+\alpha+\beta-o(1)}$ bound for the
following problem: Let $\alpha,\beta > 0$ be fixed constants and let $M$ be a
boolean $n^\beta\times n^\alpha$ matrix (see \cite{HenzingerKNS15} for the details). After preprocessing $M$, $n$ boolean
vector pairs $(u^1,v^1), \ldots, (u^n,v^n) \in \{0,1\}^{n^\beta} \times \{0,1\}^{n^\alpha} $ arrive one at a time and the task
is to compute $(u^i)^T M v^i$ before being presented with the $i+1$th vector
pair for every $i$. We will use this problem called \emph{the OuMv problem} to
reduce to unit weight dynamic APSP in planar graphs below.

\begin{proof}[Proof of Theorem~\ref{thm:apsp_unit}]
    Consider the reduction from Theorem~\ref{thm:apsp_tradeoff} using a $n^\beta\times
    n^\alpha$ grid. We will use a similar approach to solve the OuMv problem
    below.

    Let $M$ be the $n^\beta\times n^\alpha$ matrix of the OuMv problem and
    create $G_M$ according to Definition~\ref{defn:grid} (note that this grid embedding
    is different from the one used in the proof of Theorem~\ref{thm:apsp_tradeoff}). We also add
    $G'_M$ similarly to the proof of Theorem~\ref{thm:apsp_tradeoff}. We then
    subdivide each edge into a path of the same length. We also add to $G$
    a path of length $2(n^\alpha+1)(n^\beta-k)$ connecting $b_k$ and $b'_k$ for
    each $1\le k\le n^\beta$. We then disconnect $b_k$ and $b'_k$ from this
    path.

    We perform a phase as follows for each vector pair $(u^i,v^i)$:
    \begin{enumerate}
        \item For each $k$ such that $u^i_k = 1$ connect $b_k$ and $b'_k$ to
            their respective path.
        \item For each $j$ such that $v^i_j = 1$ query the distance from $a_j$
            to $a'_{n^\alpha-j+1}$.
        \item Remove all the edges added in step 1.
    \end{enumerate}
    If the answer to any of the queries during the $i$th phase is $4n(n+1)-1$
    the answer to the $i$th product is $1$ and otherwise the answer is $0$.
    This follows from Corollary~\ref{cor:grid_dist} in the same way as
    Theorem~\ref{thm:apsp_tradeoff}.

    By subdividing the edges we get a graph with $N = O(n^{2\beta+\alpha} +
    n^{2\alpha+\beta})$ nodes. We perform $O(n^{1+\alpha})$ queries and
    $O(n^{1+\beta})$ updates. It follows from the OMv conjecture that the
    entire process must take $n^{1+\beta+\alpha-o(1)}$ time, thus either
    updates take $n^{\alpha-o(1)}$ time or queries take $n^{\beta-o(1)}$ time.

    We will assume that $q(N) \ge u(N)$ and note that the other case follows
    symmetrically. Assume that some algorithm can perform queries in $N^\gamma$
    for some $\frac{1}{3} \le \gamma < \frac{1}{2}$. We wish to show that this
    algorithm cannot perform updates in time $N^{1-2\gamma-\eps}$ for any
    $\eps>0$. To do this, pick $\beta = \gamma + \eps/3$ and $\alpha = 1 -
    2\beta$. Note that $\beta \ge \alpha$ (corresponding to $q(N)\ge u(N)$).
    Thus the graph has $N = O(n^{2\beta+\alpha}) = O(n)$ nodes. It now follows
    by the above discussion that the algorithm cannot perform updates faster
    than $n^{\alpha-o(1)} = N^{1-2\gamma - 2\eps/3 + o(1)}$ which proves the
    claim.

    Finally, observe that by using an $n\times n$ matrix in the above reduction
    (i.e. $\alpha=\beta=1$) we see that at least one of updates and queries
    have to take $n^{\frac{1}{3}-o(1)}$ amortized time (similar to
    Theorem~\ref{thm:apsp_simple}).
\end{proof}

To see that we may do the above reduction while keeping the dynamic graph
$G$ as a grid graph, observe that we may multiply the weight of each edge
before subdividing by a large enough constant and then ``zig-zag'' the
subdivided edges in order to fit the grid structure. This is illustrated in
Figure~\ref{fig:grid_unw}.
\begin{figure}[htbp]
    \centering
    \includegraphics[width=.6\textwidth]{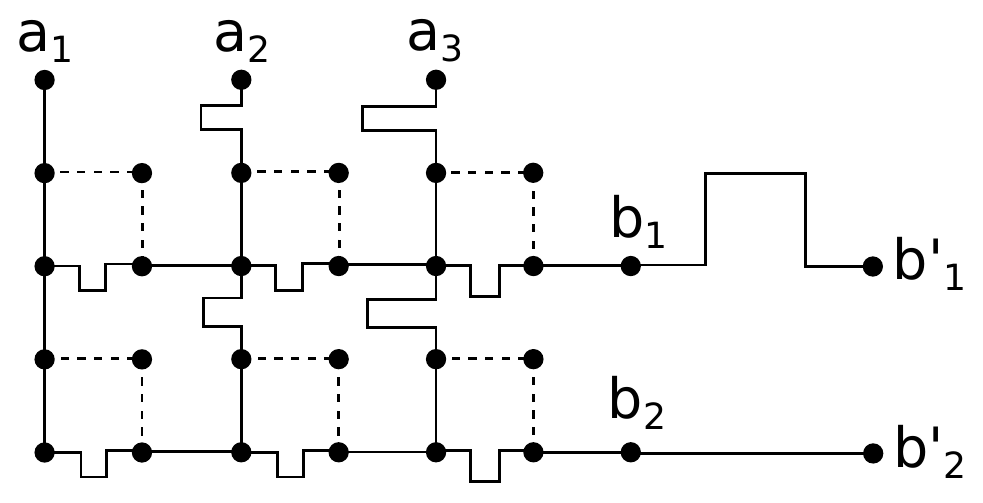}
    \caption{Illustration of the grid created in the proof of
    Theorem~\ref{thm:apsp_unit}. Dashed edges correspond to possible shortcuts. Note
    that the lengths of the edges are not to scale!}
    \label{fig:grid_unw}
\end{figure}

\section{Dynamic $s,t$-shortest path and related problems}
In Section~\ref{sec:apsp} we showed a lower bound for the trade-off between query and
update time for dynamic APSP in grid graphs conditioned on Conjecture~\ref{conj:apsp}.
Here we will argue that the proof of Theorem~\ref{thm:apsp_tradeoff} can be extended
to show similar lower bounds for dynamic problems, where the algorithm only
needs to maintain a single value such as $s,t$-shortest path, girth, and
diameter. We also note that the above techniques for proving bounds in
unweighted graphs also apply to the theorem below.

\begin{theorem}\label{thm:st_tradeoffs}
    No algorithm can solve the $s,t$-shortest path, girth (directed), or diameter
    problems in planar graphs on $N$ nodes with amortized update time $u(N)$ and
    query time $q(N)$ such that $\max(q(N),u(N)) =
    O(N^{\frac{1}{2}-\eps})$ for any $\eps > 0$ unless Conjecture~\ref{conj:apsp} is
    false. Furthermore, if $q(N) \ge u(N)$ the algorithm cannot have $q(N)\cdot
    u(N) = O(N^{1-\eps})$. This holds even if the planar embedding of $G$
    never changes.
\end{theorem}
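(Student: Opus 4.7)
The plan is to recycle the initial graph $G$ from the proof of Theorem~\ref{thm:apsp_tradeoff} (built from $G_B$ and its mirror $G'_B$ joined by the edges $(b_k,b'_k)$) and augment it so that a single scalar query recovers the same quantity $d_G(a_j, a'_{n^\alpha-j+1})$ that reduction used. For $s,t$-shortest path, I would add two fresh vertices $s$ and $t$ placed in the outer face, with $s$ connected to every $a_\ell$ and $t$ to every $a'_\ell$ by a planar fan of edges of large ``off'' weight $W$. To simulate the query at the pair $(a_j, a'_{n^\alpha-j+1})$, set the weights of $(s,a_j)$ and $(t,a'_{n^\alpha-j+1})$ to $0$, query $d(s,t)$, then restore them. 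The embedding stays fixed (both fans live in the outer face), and only $O(1)$ extra weight updates are introduced per query.

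For directed girth, I would orient every grid edge strictly rightward or downward (and the mirror grid symmetrically), turning $G_B \cup G'_B$ into a DAG apart from the joining arcs $(b_k,b'_k)$; use directed activation edges out of $s$ and into $t$ as above; and add a single directed back arc $t \to s$ of weight $0$. Every directed cycle then traverses this arc exactly once, so the girth equals the current $d(s,t)$. Diameter is the main obstacle: one must guarantee that the pair $(s,t)$ attains the overall maximum pairwise distance during each query. By Corollary~\ref{cor:grid_dist_wt} the intended active $d(s,t)$ is of order $X^2 \cdot 4n^\beta(n^\alpha+1)$, while any pairwise distance using only the grid is at most $O(X^2 \cdot n^\beta n^\alpha)$ and any distance passing through a single inactive fan edge is bounded by $W + O(X^2 \cdot n^\beta n^\alpha)$. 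Choosing $W$ in the window strictly above the bare-grid diameter but strictly below the intended active $d(s,t)$ — a window that is nonempty because $X = \poly(n)$ — makes $(s,t)$ the unique pair at the maximum, so the diameter query returns the desired value. This quantitative balancing is where the proof needs the most care.

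Each augmentation adds only $O(1)$ weight updates per query and a constant number of vertices, so the totals remain $O(n^{1+\beta} + n^{1+\alpha})$ updates and $O(n^{1+\alpha})$ queries on a graph of $N = \Theta(n^{\alpha+\beta})$ nodes. Repeating the accounting from the end of Theorem~\ref{thm:apsp_tradeoff} then yields $\max(q(N), u(N)) = \Omega(N^{1/2-\eps})$ (take $\alpha = \beta = 1/2$) and the sharper product bound $q(N)\cdot u(N) = \Omega(N^{1-\eps})$ in the regime $q(N) \ge u(N)$; the asymmetry is forced, exactly as in the matching reduction, by the extra $O(1)$ updates per query. The unweighted version is obtained by subdividing edges as in Section~\ref{sec:unweighted}, and the fixed-embedding statement holds throughout because every addition — the vertices $s,t$, the two fans, and the girth back arc — can be drawn in the outer face of the base embedding.
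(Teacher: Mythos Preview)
Your reductions for $s,t$-shortest path and directed girth are sound and close in spirit to the paper's; the paper is slightly more direct (for $s,t$-SP and diameter it attaches $s$ and $t$ by \emph{single} edges $(s,a_j)$ and $(t,a'_{n^\alpha-j+1})$ per query, and for girth it adds a single back arc $(a'_{n^\alpha-j+1},a_j)$), but your fan-plus-weight-toggle variant works equally well and already lives in the weight-update model.

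The diameter case, however, has a real gap. With the two activated fan edges set to weight $0$, you get $d(s,t)=d_G(a_j,a'_{n^\alpha-j+1})$, which is itself a distance between two grid vertices and therefore at most the bare-grid diameter. Your proposed ``window'' for $W$ is empty: you ask for $W$ strictly above the grid diameter yet strictly below $d(s,t)$, but the latter never exceeds the former. Worse, $W$ is not even the relevant parameter here; the competing long distances use the \emph{activated} edge. Concretely, for any $m<n^\alpha-j+1$ one computes (minimizing over $k$ forces $k=n^\beta$) that
\[
d(s,a'_m)=d_G(a_j,a'_m)=X^2\bigl(4n^\beta(n^\alpha+1)-1\bigr)+B_{n^\beta,j}+A_{i,n^\beta}\ \ge\ X^2\bigl(4n^\beta(n^\alpha+1)-1\bigr)+C_{i,j}=d(s,t),
\]
with strict inequality whenever the minimizer of $A_{i,k}+B_{k,j}$ is not $k=n^\beta$. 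So the diameter query returns something that does not encode $C_{i,j}$.

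The fix is the one the paper uses: give the two activated edges a \emph{large} weight $L$ (at least the grid diameter) rather than $0$. Then $d(s,t)=2L+d_G(a_j,a'_{n^\alpha-j+1})$, while every other pair distance is at most $L+D_{\mathrm{grid}}<2L\le d(s,t)$, so $(s,t)$ is forced to be the diametral pair. With this change your accounting and trade-off argument go through unchanged.
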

\begin{proof}
    We note that the proof follows the exact same structure as the proof
    of Theorem~\ref{thm:apsp_tradeoff} and only mention the changes needed to be made.

    For $s,t$-shortest path and diameter we add two additional nodes $s,t$ to
    the initial graph $G$ and when performing a query of the distance between
    $a_j$ and $a'_{n^\alpha-j+1}$ we instead insert edges $(s,a_j)$ and
    $(t,a'_{n^\alpha-j+1})$ of sufficiently high weight so that this is the
    longest distance in the graph and query the $s,t$ distance.

    For girth we direct all horizontal edges of $G$ to the right, all vertical
    edges of the left grid down and all vertical edges of the right grid up.
    When doing a query we add the directed edge $(a'_{n^\alpha-j+1},a_j)$ with
    weight $1$, and the length of the shortest cycle then corresponds to the
    shortest path from $a_j$ to $a'_{n^\alpha-j+1}$ plus $1$.
\end{proof}

In the above reductions, the condition $q(N) \ge u(N)$ comes from the fact that
we perform $O(1)$ updates for every query we make and the argument from
Theorem~\ref{thm:apsp_tradeoff} thus breaks down if we try to argue for slower updates
than queries. This makes sense from an upper bound perspective: clearly, any
algorithm with $q(N) \le u(N)$ could simply perform a query for every update,
store the answer, and then provide queries in $O(1)$ time.

\section{Weight updates}
We mentioned in the previous sections that the results hold even if we only
allow weight updates instead of edge insertions/deletions. In
\cite{AbrahamCDGW16} they considered this model in which the algorithm is
supplied with an initial graph $G$ and a promise that for any updated graph
$G'$ we have $d_G(u,v)\le d_{G'}(u,v)\le M\cdot d_G(u,v)$ for all $u,v\in G$
and some parameter $M > 1$. The only operations allowed are weight updates and
queries. We note that all the above results for weighted graphs also hold in
this model.

As a proof sketch, consider the result of Theorem~\ref{thm:apsp_tradeoff}:
The only edges whose weight changes are the ``in-between'' edges $(b_k,b'_k)$
whose weights are always between $X^2\cdot(2(n^\alpha+1)(n^\beta-k)$ and
$X^2\cdot(2(n^\alpha+1)(n^\beta-k))+X$. Similarly, for
$s,t$-shortest path and diameter: Assume that the edge $(s,a_j)$ has weight $y$
when added in the reduction of Theorem~\ref{thm:st_tradeoffs}. We may instead
initialize the graph $G$ with an edge $(s,a_j)$ of weight $y$ for each $1\le
j\le n^\alpha$, increase each edge to have weight $M\cdot y$ and then instead
of adding the edge $(s,a_j)$ we decrease its weight back to $y$. We do the
same for $t$ and the nodes $a'_j$. By picking $y$ sufficiently large we may
ensure that an edge of weight $M\cdot y$ cannot be on the shortest path from
$s$ to $t$. Furthermore these changes can still be done while maintaining the
graphs as grids.

\section{Worst-case bounds for partially dynamic problems}
Our reductions above work in the fully dynamic setting, where edge insertions
\emph{and} deletions (or weight increments \emph{and} decrements) are allowed. We now show
that, using standard techniques, we can turn these amortized bounds into
worst-case bounds for the same problem in the incremental and decremental (only
insertions/increments or deletions/decrements allowed). We will show the result
for dynamic APSP and note that the method is the same for the other problems.

\begin{corollary}\label{cor:apsp_worstcase}
    No algorithm can solve the incremental or decremental APSP problem for
    planar graphs on $N$ nodes with worst-case query time $q(N)$ and update time
    $u(N)$ such that $q(N)\cdot u(N) = O(N^{1-\eps})$ for any $\eps > 0$ unless
    Conjecture~\ref{conj:apsp} is false.
\end{corollary}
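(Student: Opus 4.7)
The plan has two parts. First, the lower bound of Theorem~\ref{thm:apsp_tradeoff} is derived by counting the total work an algorithm spends on the reduction's $T=\Theta(n^{1+\beta})$ updates and $T'=\Theta(n^{1+\alpha})$ queries, namely by showing $T\cdot u(N)+T'\cdot q(N) = \Omega(n^{1+\alpha+\beta})$ under Conjecture~\ref{conj:apsp}. This counting argument uses only per-operation upper bounds on $u$ and $q$, so it rules out worst-case trade-offs exactly as it rules out amortized ones. What is left is to produce a version of the reduction that is purely incremental (respectively decremental) while keeping the operation counts intact.

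For this I would replace each cross edge $(b_k,b'_k)$ of the reduction of Theorem~\ref{thm:apsp_tradeoff} by a parallel-path gadget of $n$ internally disjoint paths $p_{1,k},\dots,p_{n,k}$ between $b_k$ and $b'_k$. In the \emph{incremental} problem, start with no such paths and at the beginning of phase $i$ insert $p_{i,k}$ with total length $W_{i,k}+(n-i)C$, where $W_{i,k}=X^2\cdot 2(n^{\alpha}+1)(n^{\beta}-k)+A_{i,k}$ is the weight that Theorem~\ref{thm:apsp_tradeoff} would assign to $(b_k,b'_k)$ in phase $i$ and $C$ is a constant larger than the maximum possible $W$-difference. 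Since the bias $(n-i)C$ strictly decreases with $i$, the shortest $b_k$-to-$b'_k$ path in phase $i$ is always the just-inserted $p_{i,k}$, and every shortest-path query in that phase returns the answer from Theorem~\ref{thm:apsp_tradeoff} plus a uniform additive offset of $(n-i)C$ which we simply subtract. The \emph{decremental} case is dual: pre-load all $n$ paths with weights $W_{i,k}+(i-1)C$ so that $p_{1,k}$ is initially shortest, and at the start of phase $i$ delete $p_{i-1,k}$ so that $p_{i,k}$ takes over. Weight-only variants handle the restricted weight-update model by simulating these insertions and deletions via monotone weight changes between a target value and a very large ``dormant'' value.

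Each gadget performs exactly one insertion or deletion per phase per $k$, so the modified reduction still executes $\Theta(n^{1+\beta})$ updates and $\Theta(n^{1+\alpha})$ queries; a worst-case algorithm with $q(N)\cdot u(N)=O(N^{1-\eps})$ would therefore complete the full operation sequence in $o(n^{1+\alpha+\beta})$ time for the appropriate choice of $\alpha$ and $\beta$ (roughly, $\alpha$ close to $1$ and $\beta$ a small positive constant depending on $\eps$), contradicting Conjecture~\ref{conj:apsp}. The gadgets add $O(n^{1+\beta})$ extra vertices, which is only a polynomial perturbation of $N$ for this parameter choice, so the trade-off $q(N)\cdot u(N)=\Omega(N^{1-\eps})$ is preserved after shrinking $\eps$ by an arbitrarily small amount.

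The main obstacle I expect is the planarity of the gadget: the $n$ paths between each pair $(b_k,b'_k)$ must be embedded without crossings and in a way compatible with the planar drawings of the two grids $G_B$ and $G'_B$. My plan is to route them as nested arcs in the corridor separating $G_B$ from $G'_B$, with arcs for different $k$ occupying disjoint horizontal strips; a similar layered-subdivision idea is already used for a related purpose in Section~\ref{sec:unweighted} (see Figure~\ref{fig:grid_unw}). Once this embedding is in place, the rest of the argument is a direct reuse of the counting already carried out for Theorem~\ref{thm:apsp_tradeoff}.
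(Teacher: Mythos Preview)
Your approach is a legitimate alternative but differs substantially from the paper's. The paper does \emph{not} modify the reduction graph at all; it uses the standard \emph{rollback} trick. One runs exactly the reduction of Theorem~\ref{thm:apsp_tradeoff} with the cross edges $(b_k,b'_k)$ initialized to their base weight $X^2\cdot 2(n^\alpha+1)(n^\beta-k)$, and in each phase one only \emph{increases} these weights by $A_{i,k}$ while recording every memory cell the data structure touches. After the phase's queries, the recorded writes are undone externally, restoring the data structure to its pre-phase state. From the data structure's point of view it only ever sees weight increases, so this is an honest incremental instance; the worst-case guarantee bounds each operation individually regardless of the (externally manipulated) state, and the rollback cost is at most the cost of the updates themselves. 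Amortized bounds are lost precisely because the rollback can refund amortization credit. This keeps $N$, the update count, and the query count identical to Theorem~\ref{thm:apsp_tradeoff}, so the full trade-off transfers with no further calculation.

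Your parallel-path gadget can also be made to work, but the parameter discussion has a real gap. A single choice of $(\alpha,\beta)$ rules out only one point on the trade-off curve; to get the product bound you must vary $(\alpha,\beta)$ with the assumed query exponent $\gamma$, as in the proof of Theorem~\ref{thm:apsp_tradeoff}. Once you do that, the vertex blowup is not always benign: the gadget contributes $\Theta(n\cdot n^\beta)=\Theta(n^{1+\beta})$ vertices (one path per phase per cross edge), while the base grid has $\Theta(n^{\alpha+\beta})$, so for $\gamma$ near $1$ (hence $\beta$ near $1$, $\alpha$ near $0$) the blowup is quadratic in $N$, not a small perturbation, and the derived lower bound on $u(N)\cdot q(N)$ degrades to roughly $N^{1/2}$. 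Your stated regime ``$\alpha$ close to $1$, $\beta$ small'' handles only $\gamma$ near $0$. A clean fix is to reduce from $n^{c}\times n^{\beta}$ by $n^{\beta}\times n^{\alpha}$ $(\min,+)$-multiplication with $c\le\alpha$, so the gadget has $n^{c+\beta}\le n^{\alpha+\beta}$ vertices and $N$ is unchanged; then the counting of Theorem~\ref{thm:apsp_tradeoff} carries over verbatim. The nested-arc planar embedding you sketch is fine.
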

\begin{proof}
    We present the argument for the problem when we are given an initial graph
    $G$ and are only allowed to increase weights on edges. The proof uses the
    same rollback technique employed before in several papers (see e.g.
    \cite{AbboudV14}).

    First we create the same initial graph, $G$, as in the proof of
    Theorem~\ref{thm:apsp_tradeoff}. We set the initial weight of the edges
    $(b_k,b'_k)$ to be $X^2\cdot(2(n^\alpha+1)(n^\beta-k))$. During the phase
    of each row $i$ of $A$ we keep track of all memory changes made by the
    incremental data structure while increasing each edge to have weight
    $X^2\cdot(2(n^\alpha+1)(n^\beta-k)) + A_{i,k}$. We then perform each
    distance query and instead of deleting the incremented edges, we ``roll
    back'' the data structure using the memory changes we kept track of, thus
    restoring $G$ to its initial state. By doing this we solve the
    $(\min,+)$-Matrix-Mult problem in the exact same way as in
    Theorem~\ref{thm:apsp_tradeoff}. However, we cannot ensure any requirement on the
    amortized running time, as the rollback operations may essentially
    ``restore all credit'' to the data structure in the sense of amortized
    analysis. Thus, the time bounds only apply to worst-case running times.
\end{proof}

\paragraph*{Acknowledgements}
The authors would like to acknowledge Shay Mozes, Oren Weimann and Virginia
Vassilevska Williams for helpful comments and discussions.

%

\bibliographystyle{plain}
\bibliography{lowerbounds}

\appendix

\end{document}